\newtheorem{Theorem}{Theorem}[section]
\newtheorem{Fact}{Fact}[section]
\newtheorem{Proposition}[Theorem]{Proposition}
\newtheorem{Corollary}[Theorem]{Corollary}
\newtheorem{Remark}[Theorem]{Remark}
\newtheorem{Definition}[Theorem]{Definition}
 \definecolor{darkgreen}{rgb}{0,0.4,0}
\definecolor{light}{gray}{.9}
\newcommand{\cA}{\ensuremath{\mathcal A}}
\newcommand{\cC}{\ensuremath{\mathcal C}}
\newcommand{\cE}{\ensuremath{\mathcal E}}
\newcommand{\cI}{\ensuremath{\mathcal I}}
\newcommand{\cK}{\ensuremath{\mathcal K}}
\newcommand{\cP}{\ensuremath{\mathcal P}}
\newcommand{\cQ}{\ensuremath{\mathcal Q}}
\newcommand{\cT}{\ensuremath{\mathcal T}}
\newcommand{\bbP}{{\ensuremath{\mathbb P}} }
\newcommand{\bbR}{{\ensuremath{\mathbb R}} }
\newcommand{\bbX}{{\ensuremath{\mathbb X}} }
\newcommand{\bbY}{{\ensuremath{\mathbb Y}} }
\let\a=\alpha    \let\d=\delta  
 \let\g=\gamma       \let\l=\lambda
   \let\t=\tau   \let\th=\vartheta
     \let\L=\Lambda
\begin{document}

\begin{abstract}   In this short note we consider semi--Markov processes satisfying the condition of direction--time independence (Markov renewal processes). We derive large deviation principles and fluctuation theorems for the empirical current and the empirical currents along cycles. Our  derivation  is based on the   joint 
LDP for the empirical measure and flow recently proved in \cite{MZ}.

\medskip

\noindent {\em Keywords}: Semi--Markov process, Markov renewal process, Large deviation principle, Empirical current,  Gallavotti--Cohen type symmetry, Fluctuation theorem.

\end{abstract}

\author[A.\ Faggionato]{Alessandra Faggionato}
\address{Alessandra Faggionato.
  Dipartimento di Matematica, Universit\`a di Roma `La Sapienza'
  P.le Aldo Moro 2, 00185 Roma, Italy}
\email{faggiona@mat.uniroma1.it}

\title[LDP's and FT's  for currents in  semi--Markov  processes]{Large deviation principles and fluctuation theorems for currents in  semi--Markov  processes}

\maketitle

\section{Introduction}
Semi--Markov processes with direction--time independence are stochastic processes similar to continuous time Markov chains with the exception that the holding times  are  not necessarily exponential random variables (hence, these processes are in general non Markovian). In the mathematical literature they  are also  known as  \emph{Markov renewal processes}  \cite{As}.  They find several applications, also in the study of molecular motors (cf. e.g. \cite{As,AG3,FD,FS2,L,MNW,MZ,WQ} and references therein). 

For several  Markov processes (as diffusions and  Markov chains) in the last years much attention has been devoted to  the large deviations of the empirical current and the associated fluctuation theorems (also called Gallavotti--Cohen symmetries). See e.g. 
\cite{BFG1,BFG2,LS} and references therein.

Considering semi--Markov processes with direction--time independence, previous derivations of large deviation principles for the  joint empirical measure and current as well for the empirical current have been obtained (in a not completely rigorous way) in \cite{AG3,MNW}. Fluctuation theorems have also been discussed in particular in  \cite{AG3}, also for empirical  currents along cycles. 

In this short note we show how the above  LDPs and the fluctuation theorems can be derived from the  joint 
LDP for the empirical measure and flow recently proved in \cite{MZ}.
We also give some extension to  generic semi--Markov processes (without  direction--time independence). Our derivation  covers also  the  case of semi--Markov processes with  holding times having   law with heavy tails or without a probability density (these cases indeed do not fit well with the arguments presented in \cite{AG3,MNW}). In addition, our  derivation is given by simple   mathematical proofs.

\section{Semi--Markov processes}

\subsection{Semi-Markov processes with direction--time independence (DTI)} 
 Given a  finite state space $V$, the DTI semi--Markov process 
 $\bbX:=(\bbX _t)_{ t \geq 0}$  on $V$  is defined from the following objects: a  transition probability kernel  $(p_{x,y})_{x,y \in V} $  that we assume to be irreducible, a  probability measure $\g$ on $V$ and  a family of probability measures $\psi_x $ on $(0,+\infty)$ parametrized by $x \in V$. 
Having  these objects, we introduce 
  a discrete--time Markov chain $(X_k,\t_k)_{k \geq 0}$ on $V \times (0,+\infty)$ such that 

\begin{itemize}

\item[(C1)]
$(X_k) _{k \geq 0}$ is a Markov chain  on $V$ with transition probabilities $p_{x,y}$, $x,y \in V$,   and initial  distribution $\g$. 
 By the above assumption, this Markov chain  is irreducible;

\item[(C2)]  $(\t_k)_{k \geq 0} $ is a random sequence on $(0, +\infty)$ such that, conditionally to $(X_k)_{k \geq 0}$,  $(\t_k)_{k \geq 0} $ are i.i.d random  variables  and   $\t_k$ has law $\psi _{X_k}$, i.e. 
\begin{equation}\label{uffina}
\bbP_\g \bigl( \t_i \in A \,|\, ( X_k)_{k \geq 0} \bigr) = \psi_{X_i } (A)  \qquad  \qquad i \geq 0\,, \;\; A \subset \bbR \text{ measurable} \,.
\end{equation}
\end{itemize}

\medskip

Then the DTI semi--Markov process     $\bbX= (\bbX_t)_{t \geq 0}$ is  obtained  from $(X_k)_{k \geq 0}$   by the following random  time--change: 
 at time $0$ the system starts at state $X_0$ and it   remains there for a holding  time  $\t_0$, at time $\t_0$ the system  jumps to state $X_1$ and it remains  there for a holding time $\t_1$ and so on.
 We can formalize this definition as follows. 
We set 
\begin{equation}\label{lazio}S_0:=0\,, \qquad \qquad S_k:= \sum_{i=0}^{k-1} \t_i  \;\; \text{ for }  k \geq 1\,.
\end{equation}
 Then, given $ t \geq 0$, we define $N_t$ as the unique nonnegative integer $k$  such that $ S_k \leq t < S_{k+1}$. 
 Note that the above definition is well posed $\bbP_\g$--a.s. since, as one can easily prove, $\bbP_\g$--a.s. it holds $ \lim _{k \to \infty} S_k = +\infty$. Then we define
 \begin{equation}\label{marche}
 \bbX_t:= X_{N_t} \,.
 \end{equation}
Note that $X_t= X_0$ for $t \in [0,\t_0)=[S_0,S_1)$, $X_t=X_1$ for $t\in [\t_0,\t_0+\t_1)=[S_1,S_2)$,...

\smallskip 

When  $\psi_x (dt ) $ is of the form $ f(x,t) dt $ for some density function $f(x, \cdot)$, then  $\bbX_t$ corresponds to the process introduced in \cite[Section 2.1]{MNW} with $Q(x,t) = f(x,t)$ there. Note that, when $f(x, t) = \l_x e^{- \l_x t}$, then $\bbX_t$ is simply a continuous--time  Markov chain on $V$ with transition probability rates $r_{x,y}= \l_x p_{x,y}$.

\smallskip

\subsection{Generic semi-Markov processes}\label{spada}
The condition of  direction--time independence  corresponds to the fact that the law of  $\t_i$ is determined when $X_i$ is known. In a generic semi--Markov process the law of $\t_i$ is determined when $X_{i},X_{i+1}$ are known, in particular  the holding time  at $X_i$ can depend   also from the  new state $X_{i+1}$ achieved after the transition. 

As a consequence, instead of working with the family $\{ \psi_x \}_{x \in V}$, we have  a family of probability measures $\psi_{x,y}$ on $(0,+\infty)$ parameterized by $(x,y ) \in V \times V$. 
Then  one again consider the discrete--time process  $(X_k, \t_k) _{k \geq 0}$ (which is not anymore Markov) satisfying conditions (C1) and $(\text{C2}^{\star})$, where the new condition 
$(\text{C2}^{\star})$ reads as follows:

\begin{itemize}

\item[$(\text{C2}^{\star})$] $(\t_k)_{k \geq 0} $ is a random sequence on $(0, +\infty)$ such that, conditionally to $(X_k)_{k \geq  0 }$,  $(\t_k)_{k \geq 0} $ are i.i.d random  variables  and   $\t_k$ has law $\psi _{X_{k},X_{k+1}}$, i.e. 
\begin{equation}\label{uffina_bis}
\bbP_\g \bigl( \t_i \in A \,|\, ( X_k)_{k \geq 0} \bigr) = \psi_{X_{i}, X_{i+1} } (A) \qquad \forall i \geq 0,, \;\; A \subset \bbR \text{ measurable} \,.
\end{equation}
\end{itemize}
Then the semi--Markov process $\bbX_t$ is again defined by \eqref{lazio} and  \eqref{marche}.

\medskip

For the above definition it is simple to check that 
\begin{equation}\label{saurus}
\bbP_\g \bigl( \t_k \in A \,, \; X_{k+1} =y\, |\, 
X_0,X_1, \dots, X_k, \t_0, \t_1, \dots, \t_{k-1} \bigr)=p(X_k,y) \psi _{X_k,y} (A)\,.
\end{equation}
In particular, the above defined semi--Markov process  corresponds to the one introduced in \cite[Appendix A.1]{MNW} by setting there  $\L(x,y;t):=p_{x,y} \psi_{x,y} \bigl ( \,(t, +\infty)\bigr)$. When $\psi_{x,y} $ is absolutely continuous w.r.t. the Lebesgue measure on $(0,+\infty)$, i.e. $\psi_{x,y}(dt) = f_{x,y} (t) dt $, we then obtain  that the function  $Q(x,y;t)$  in \cite[Appendix A.1]{MNW} equals $p_{x,y} f_{x,y}(t)$.

\medskip

We recall that any generic semi--Markov process on $V$ with irreducible transition kernel can be formulated in terms of  a DTI  semi--Markov process on $E:=\{ (x,y) \in V \times V\,:\, p(x,y)>0 \}$ with irreducible transition kernel. To this aim,  
 consider the discrete--time  Markov chain $(Y_k)_{k \geq 0}$ on $E$, 
 with   irreducible transition kernel given by $\hat p_{(x,y) , (v,z) }= \d_{y,v} p_{y,z}$ and initial distribution    given by the  distribution of $(X_0, X_1)$ under $\bbP_\g$.  We  write  $( \bbY_t)_{t\geq 0}$ for the associated semi--Markov process with 
  direction--time independence
  such that $\psi_{x,y}$ is the holding time distribution  at state $(x,y) \in E$.   Then the semi--Markov process $(\bbX_t)_{t \geq 0}$ can be realized simply by defining $\bbX_t$ as the first coordinate of $\bbY_t$.

%

%
%

\subsection{Empirical measure and flow}
 Given $t>0$, the \emph{empirical measure}
$\mu_t$  is defined as the random probability measure  
\begin{equation}
\mu_t: = \frac{1}{t} \int_0^t \d_{(X_{N_s}, \t_{N_s} )} ds=  \frac{1}{t} \int_0^t \d_{(\bbX_s, \t_{N_s} )} ds
\,.
\end{equation}
In other words, the expectation $\mu_t(f)$ of a function $f$  is given by  
\[
\mu_t (f)= \frac{1}{t} \int_0^t f( \bbX_s, \t_{N_s} ) ds  = \sum _{k =0}^{N_t-1} \frac{\t_k }{t}  f( X_k, \t_k) + \frac{t-S_{N_t} }{t} f(X_{N_t}, \t_{N_t} ) \,.
\] 
The empirical measure $\mu_t$ is a probability on $ V \times (0,+\infty)$ and, by trivial extension,   can be thought of as an element of  $\cP( V \times (0,+\infty])$, the space of probabilities on $ V \times (0,+\infty]$.

The \emph{empirical flow} is defined as the random element of $\bbR_+^{V \times V}$ given by 
\begin{equation}\label{piero}
Q_t(x,y):=\frac{1}{t} \sum_{k=0}^{N_t-1}\mathds{1} ( X_k=x\,, X_{k+1} = y )\,.\end{equation}
We recall that $\bbR_+:=[0,+\infty)$.
 Note that, for $x\not = y$,  we have 
 $Q_t(x,y)= \frac{1}{t} \sum_{s \in (0, t]} \mathds{1}( \bbX _{s-}=x, \bbX _{s}=y)$. If one allows  $p_{x,x}$ to be positive, then $Q_t(x,x)$ can be positive.

 \smallskip
 
 Due to the above definitions, the joint empirical measure and flow $( \mu_t, Q_t) $ is a random  element of the 
  the product space
 \begin{equation}
  \L := \cP( V \times (0,+\infty]) \times \bbR_+ ^{V \times  V}\,.
  \end{equation}


\section{LDP for the joint  empirical measure and flow  for DTI semi--Markov processes  \cite{MZ}}\label{suntoMZ}

In this section we restrict to DTI semi--Markov processes and 
we recall  the joint large deviation principle  for the empirical measure and flow recently obtained by Mariani and Zambotti \cite{MZ}.  We point out that our notation is slightly different  from the one in \cite{MZ} since they call $\t_{k+1}$ our random variable $\t_k$.

\smallskip

We write $\cP( V \times (0,+\infty])$ for the space of probability measures on  $V \times (0,+\infty]$ ($V$ has the discrete topology, and $(0,+\infty]$ is a metric space by the identification
 $(0,+\infty] \ni x \mapsto  \frac{x}{1+x}\in  (0,1]$). The space $ \cP( V \times (0,+\infty])$  is endowed with the weak topology. We also consider the euclidean  space $\bbR_+^{V \times V}$ of functions  $ V \times V \ni (x,y) \mapsto Q(x,y) \in \bbR_+$.
 
 \smallskip
 
 Let us write $\nu$ for the unique invariant distribution of the Markov chain $(X_k)_{k\geq 0}$. As discussed in \cite[Section 4]{MZ}, as $t\to \infty$ the empirical measure $\mu_t$ satisfies the following  LLN  for any initial distribution $\g$:
 \[\mu_t(x, d\t)  \to  \frac{ \nu_x}{\sum _z \nu_z \int \t \psi_z(d \t) } \t\psi_x (d\t)\,, \qquad \bbP_{\g}\text{--a.s.}
 \] 
Again,  in \cite[Section 4]{MZ}, it is proved that as $t\to \infty$ the empirical flow $Q_t$ satisfies the following  LLN  for any initial distribution $\g$:
 \[Q_t (x,y) \to  \frac{ \nu_x p_{x,y} }{\sum _z  \nu_z\int \t \psi_z(d \t) } \,, \qquad \bbP_{\g}\text{--a.s.}
 \] 
To descrive the large deviations  from the above LLN's  we need  some notation.

 \begin{Definition}\label{chirurgia}
 We define $\L_0$ as the subspace of $\L$ given by the  pairs $(\mu, Q) $ such that, for any $x \in V$,  the following holds:
\begin{itemize}
\item the measure  $ \mu (x, d\t)$ restricted to $(0,+\infty)$ is absolutely continuous w.r.t. $\psi_x$,
\item  $ Z_x:=  \int_{(0, +\infty)} \mu(x, d\t)  \frac{1}{\t}= \sum _{y\in V} Q(x,y) = \sum _{y\in V} Q(y,x)$.
\end{itemize}
Given  $(\mu,Q) \in \L_0$, we define 
\begin{equation}\label{arrosticino1}  p_{x,y}^Q := \frac{Q(x,y) }{Z_x} \,, \qquad \tilde \mu (x,d\t) =\frac{1}{Z_x} \frac{1}{\t} \mu(x, d\tau) \,,
\end{equation}
with the convention that $ \tilde \mu (x,\{ +\infty \} )=0$.\end{Definition}

Note that    $p_{x,y}^Q $ is a probability kernel  on $V$ and that $\tilde \mu$ is a probability measure on $V \times (0,+\infty)$.
We also  point out that if $(\mu,Q) \in \L_0$ then ${\rm div}\, Q\equiv 0$. We recall that,
 given an element $\mathfrak{f} \in \bbR^{V \times V}$, the divergence ${\rm div}\,\mathfrak{f}: V \to \bbR$ is defined as
 \begin{equation}\label{def_div} {\rm div}\, \mathfrak{f}(x) := \sum _{y \in V} \mathfrak{f}(x,y)- \sum_{y \in V} \mathfrak{f}(y,x) \,.
 \end{equation}
 In the case of a flow $Q$, the divergence ${\rm div}\, Q(x)$ is simply the difference between  the flow exiting from $x$ and the flow entering into $x$.

In what follows,   given two probability measures $P,P'$, we denote by $H(P|P')$ the entropy of $P$ w.r.t. $P'$.

 \begin{Fact}\label{vettore} (Joint LDP for $(\mu_t, Q_t)$  by Mariani \& Zambotti   
 \cite{MZ})
 
  Under $\bbP_\g$ the random pair $(\mu_t, Q_t)$ satisfies a large deviation principle  as $t \to \infty$, with speed $t$ and explicit rate function $I$ given by
  \begin{equation} \label{eq:LDP}
  I(\mu,Q):= 
  \begin{cases}
  \sum _{x \in V} Z_x \left[ H \bigl( p_{x, \cdot} ^Q | p_{x, \cdot} \bigr ) +H \bigl(
  \tilde\mu (x, \cdot) | \psi_x \bigr)   + \xi _x \mu (x , \{+\infty\}) \right] & \text{ if } (\mu, Q) \in \L_0\,,\\
  +\infty & \text{ otherwise}\,,
  \end{cases} 
  \end{equation}
  where \footnote{We use the convention that $\xi _x \mu(x, \{+\infty\})=0$ if $\xi_x=\infty$ and $\mu(x, \{+\infty\})=0$.}
  \begin{equation}\label{arrosticino2}
\xi_x := \sup \{ c \geq 0 \,:\, \int \psi_x (d\t) e^{c \t} < \infty\}\,, \qquad x \in V \,.
\end{equation}
  Moreover, the rate function $I$ is good, i.e. the level set $\{ (\mu, Q) \,:\, I(\mu,Q) \leq \a \}$ is compact for any $ \a \in [0, +\infty)$.
 \end{Fact}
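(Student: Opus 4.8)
To prove this result one would combine a discrete--time large deviation principle, valid after $n$ jumps, with a random time change converting the jump counter $n$ into the clock time $t$. Since $V$ is finite and $p$ is irreducible, the initial law $\g$ is irrelevant for the LDP, so one may take $\g=\nu$. For the discrete part, set $\pi_n(x,y):=\tfrac1n\sum_{k=0}^{n-1}\mathds{1}(X_k=x,X_{k+1}=y)$; by the classical LDP for the empirical pair measure of a finite irreducible Markov chain, $\pi_n$ satisfies an LDP with speed $n$ and rate $\sum_x\pi_1(x)\,H\big(\pi(x,\cdot)/\pi_1(x)\,\big|\,p_{x,\cdot}\big)$, which is finite only when $\pi_1(x):=\sum_y\pi(x,y)=\sum_y\pi(y,x)$ for every $x$. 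Conditionally on $(X_k)_{k\ge0}$, the holding times observed at a state $x$ form an i.i.d.\ sequence with law $\psi_x$, so applying Sanov's theorem state by state and combining with the previous LDP (the two families of empirical averages being conditionally independent given the chain) yields, with speed $n$, an LDP for the joint empirical measure
\[
R_n:=\frac1n\sum_{k=0}^{n-1}\d_{(X_k,X_{k+1},\t_k)}\in\cP\big(E\times(0,+\infty)\big)
\]
with rate function, $R(\cdot\,|\,x)$ denoting the conditional law of the time coordinate of $R$ given first coordinate $x$,
\[
\cJ(R)=\sum_{x\in V}\pi_1(x)\Big[H\big(\pi(x,\cdot)/\pi_1(x)\,\big|\,p_{x,\cdot}\big)+H\big(R(\cdot\,|\,x)\,\big|\,\psi_x\big)\Big].
\]
(Equivalently, $(X_k,X_{k+1},\t_k)_{k\ge 0}$ is itself a Markov chain on $E\times(0,+\infty)$, with a transition kernel of product form, and one could invoke the Donsker--Varadhan LDP for its empirical measure directly.)

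The second step is the time change. On the event $\{R_n\approx R\}$ one has $S_n=\sum_{k<n}\t_k\approx n\langle\t\rangle_R$ with $\langle\t\rangle_R:=\int\t\,dR$, hence $N_t\approx t/\langle\t\rangle_R$; moreover $Q_t(x,y)=\tfrac{N_t}{t}\,\pi_{N_t}(x,y)$ and, up to the boundary contribution $\tfrac{t-S_{N_t}}{t}\,\d_{(X_{N_t},\t_{N_t})}$ (which is negligible at exponential scale), $\mu_t(x,d\t)\approx\tfrac{N_t}{t}\,\t\,R_{N_t}(x,d\t)$. Thus $(\mu_t,Q_t)$ is, to leading exponential order, a function of $(R_{N_t},N_t/t)$, and on $\L_0$ this correspondence is inverted precisely by the formulas of Definition \ref{chirurgia}: $N_t/t\to Z:=\sum_xZ_x$, $\langle\t\rangle_R=1/Z$, $\pi_1(x)=Z_x/Z$, $\pi(x,\cdot)/\pi_1(x)=p^Q_{x,\cdot}$ and $R(\cdot\,|\,x)=\tilde\mu(x,\cdot)$. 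Since $n$ jumps take a time $\approx n\langle\t\rangle_R$, reaching clock time $t$ requires $n\approx t/\langle\t\rangle_R=tZ$ jumps, so the speed--$n$ rate $\cJ(R)$ turns into the speed--$t$ rate $Z\,\cJ(R)$; one would make this rigorous through a standard random--time--change argument for large deviations (a contraction, together with the negligibility of the boundary term and of the discrepancy between $t$ and $S_{N_t}$). Expanding $Z\,\cJ(R)$ via the substitutions above reproduces exactly \eqref{eq:LDP} on the set where $\mu(x,\{+\infty\})=0$ for all $x\in V$.

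The remaining task, and the main obstacle, is to extend the LDP to all of $\cP(V\times(0,+\infty])$ and to identify the boundary term $\xi_x\mu(x,\{+\infty\})$. The plan is to truncate the holding times: replacing $\t_k$ by $\t_k\wedge M$ and repeating the two steps above gives the LDP with the same rate but with $\psi_x$ replaced by the law $\psi_x^M$ of $\t\wedge M$ and with no $\{+\infty\}$--term; one then lets $M\to\infty$, gluing these principles by a Dawson--G\"artner / inverse--contraction argument along the projections $\mu\mapsto\mu|_{\{\t\le M\}}$, after establishing exponential tightness in $\cP(V\times(0,+\infty])$ --- the only delicate direction being escape of mass to $\t=0^+$, which is ruled out by the finiteness of $Z_x=\int\tfrac1\t\,\mu(x,d\t)$. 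The extra term arises because a macroscopic fraction $a=\mu(x,\{+\infty\})$ of the elapsed time is produced, before truncation, by holding times of size of order $t$ at state $x$, and the cost of such events is controlled by the exponential moments $\int e^{c\t}\psi_x(d\t)$, hence by $\xi_x$ as in \eqref{arrosticino2}: for the lower bound one uses that, by the very definition of $\xi_x$, there are arbitrarily large scales $r$ with $\psi_x\big((r,+\infty)\big)\ge e^{-(\xi_x+o(1))r}$, so a single holding time of length $\approx at$ can realize this at cost $\approx\xi_x\,a\,t$; the matching upper bound, and the behaviour at the boundary value $c=\xi_x$ of the tilting parameter, are the delicate points.

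Finally, goodness of $I$ would follow from lower semicontinuity --- relative entropy is jointly l.s.c.\ in the weak topology, $(\mu,Q)\mapsto Z_x$ is l.s.c.\ (integral of the nonnegative function $\t\mapsto1/\t$), $\mu\mapsto\mu(x,\{+\infty\})$ is u.s.c.\ while $\xi_x\ge0$, and $\L_0$ is closed --- together with compactness of the sublevel sets $\{I\le\a\}$, obtained by observing that finiteness of $I$ and the normalization $\mu\in\cP(V\times(0,+\infty])$ bound the $\bbR_+^{V\times V}$--component and force tightness of the $\cP(V\times(0,+\infty])$--component.
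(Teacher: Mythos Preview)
The paper does not prove this statement: it is labeled ``Fact'' precisely because it is quoted from \cite{MZ} (Mariani--Zambotti), and the whole point of Section~\ref{suntoMZ} is to recall their result as the input for the subsequent contraction and fluctuation--theorem arguments. There is therefore no proof in the paper to compare your proposal against.

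That said, your outline is broadly in the spirit of how such results are obtained: a discrete--time LDP for the pair empirical measure of $(X_k,X_{k+1},\t_k)$, followed by a random time change $n\mapsto t$, with the heavy--tail/boundary contribution $\xi_x\,\mu(x,\{+\infty\})$ handled separately. A few points deserve more care than you give them. First, the random--time--change step is not literally a contraction (the map $(R_n,n)\mapsto(\mu_t,Q_t)$ involves the random inverse $n=N_t$), and making it rigorous is exactly where the work lies; simply saying ``a standard random--time--change argument'' hides nontrivial upper and lower bound estimates. Second, in your goodness sketch you assert that $\L_0$ is closed, but it is not: the absolute--continuity requirement $\mu(x,d\t)|_{(0,+\infty)}\ll\psi_x$ is not a weakly closed condition. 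Lower semicontinuity of $I$ must instead come from the relative entropy term $H(\tilde\mu(x,\cdot)\,|\,\psi_x)$ blowing up along any sequence that loses absolute continuity. Third, the term $\mu\mapsto\mu(x,\{+\infty\})$ is upper semicontinuous, which goes the wrong way for lower semicontinuity of $I$; one needs an argument showing that mass escaping to $\{+\infty\}$ is already penalized at rate at least $\xi_x$ by the entropy of nearby compactly supported approximants, which is precisely the content of the variational characterization of $\xi_x$. These are the genuinely delicate points of \cite{MZ}, and your sketch correctly flags the third as ``the main obstacle'' but does not resolve it.
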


 Note that if
 e.g.  $\psi_x$ is an exponential distribution with parameter   $\l_x$, then $\xi(x) = \l_x$.

 \section{Fluctuation theorem for the empirical  current of DTI semi--Markov processes }\label{FT_for_currents}
 
 We denote by $\bbR^{V \times V} _{\text{antis.} }$ the space of antisymmetric functions $J : V \times V \to \bbR$ (equivalently, antisymmetric real square  matrixes with indexes in $V$).
 The \emph{empirical current} is defined as the random element of   $\bbR^{V \times V}_{\text{antis.}}$ given by 
\begin{equation}
J_t(x,y):= Q_t(x,y)-Q_t(y,x)\,, \qquad (x,y) \in V \times V\,,
\end{equation}
 i.e., for $x \not  =y$,  $J_t(x,y)$ is given by the number of transitions   per unit time from $x$ to $y$ minus the  number of transitions   per unit time from $y$ to $x$ performed by the semi--Markov process $(\bbX_s) _{s \in [0,t]}$. Trivially, $J_t(x,x)=0$.

\medskip

Given $Q \in \bbR_+^{V \times V}$ we define $J ^Q \in \bbR^{V \times V}_{\text{antis.}}$  as 
\begin{equation} \label{albero}
 J^Q(x,y) = Q(x,y) - Q(y,x) \,, \qquad (x,y) \in V \times V \,.
\end{equation}
By applying the contraction principle to Fact \ref{vettore} we get:

\begin{Proposition}\label{tititi}$[$LDP for $(\mu_t, J_t)$ and LDP for $J_t]$ Under $\bbP_\g$,  the random pair $(\mu_t, J_t)$ satisfies a large deviation principle  as $t \to \infty$, with speed $t$ and  good rate function $\tilde I $ given by
 \begin{equation}\label{altino}
 \tilde I (\mu, J)= \inf \{ I(\mu,Q) \,:\, Q \in \bbR_+^{V \times V}  \,, \; J^Q=J\}\,.
 \end{equation}
 Similarly, 
 under $\bbP_\g$,  the empirical current  $J_t$ satisfies a large deviation principle  as $t \to \infty$, with speed $t$ and  good rate function $\cI  $ given by
 \begin{equation}\label{altino_bis}
 \cI ( J)= \inf \{ I(\mu,Q) \,:\, (\mu,Q) \in \L \,, \; J^Q=J\}\,.
 \end{equation}
\end{Proposition}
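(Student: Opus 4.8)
The plan is to obtain both large deviation principles as a direct consequence of the contraction principle applied to the joint LDP of Fact~\ref{vettore}. Recall that the contraction principle states: if $(\mu_t,Q_t)$ satisfies an LDP on $\L$ with good rate function $I$, and $F:\L\to\cY$ is continuous into some metric space $\cY$, then $F(\mu_t,Q_t)$ satisfies an LDP on $\cY$ with good rate function $y\mapsto\inf\{I(\mu,Q):F(\mu,Q)=y\}$. So the whole proof reduces to identifying the right continuous maps and checking continuity.

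For the first statement, I would take $F_1:\L\to\cP(V\times(0,+\infty])\times\bbR^{V\times V}_{\text{antis.}}$ defined by $F_1(\mu,Q)=(\mu,J^Q)$, where $J^Q$ is as in \eqref{albero}. The map $Q\mapsto J^Q$ is linear, hence continuous on the Euclidean space $\bbR_+^{V\times V}$, and the identity on the $\mu$-coordinate is trivially continuous, so $F_1$ is continuous. Since $(\mu_t,J_t)=F_1(\mu_t,Q_t)$ by the definition of $J_t$ and \eqref{albero}, the contraction principle yields that $(\mu_t,J_t)$ satisfies an LDP with good rate function
\[
\tilde I(\mu,J)=\inf\{I(\mu,Q):Q\in\bbR_+^{V\times V},\ J^Q=J\}\,,
\]
which is exactly \eqref{altino}. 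For the second statement, I would compose with the further (continuous) projection onto the current coordinate, or equivalently apply the contraction principle directly with $F_2:\L\to\bbR^{V\times V}_{\text{antis.}}$, $F_2(\mu,Q)=J^Q$; this gives the LDP for $J_t$ with good rate function $\cI(J)=\inf\{I(\mu,Q):(\mu,Q)\in\L,\ J^Q=J\}$, which is \eqref{altino_bis}. (Alternatively one checks that $\cI(J)=\inf_\mu\tilde I(\mu,J)$, consistent with contracting in two stages.)

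The only genuinely non-routine point is the verification that $J^Q=J$ constrains the infimum to a nonempty set when $\cI(J)<\infty$ — but this is automatic from the contraction principle, since the set is nonempty precisely when $J$ lies in the image, and outside the image the infimum over the empty set is $+\infty$, consistent with $J_t$ never realizing such values in the large deviation regime. Goodness of $\tilde I$ and $\cI$ is also automatic: the contraction principle preserves goodness of the rate function when the map is continuous, using that $I$ is good (Fact~\ref{vettore}) and that continuous images of compact sets are compact. So there is really no obstacle here; the main care needed is simply to state the continuous maps precisely and to note that the topology on $\bbR^{V\times V}_{\text{antis.}}$ (and on $\bbR_+^{V\times V}$) is the Euclidean one, so that linear maps are continuous and the hypotheses of the contraction principle are met verbatim.
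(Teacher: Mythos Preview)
Your proposal is correct and follows exactly the paper's approach: the paper simply states that the proposition follows ``by applying the contraction principle to Fact~\ref{vettore}'', and your write-up just spells out the relevant continuous maps and the standard verification that contraction preserves goodness. There is nothing to add or correct.
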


We now move to the fluctuation theorem.  To this aim, as usual, we restrict to the case
\begin{equation}\label{simone}
p_{x,y}>0 \text{ if and only if } p_{y,x}>0\,.
\end{equation}
It is convenient to introduce the set $E$ of pairs $(x,y)$ such that both the transition from $x$ to $y$ and the transition from $y$ to $x$ are possible:
\begin{equation}\label{simone_bis}
 E:= \{ (x,y) \in V \times V\,:\, p_{x,y}>0\,, \; p_{y,x}>0\}\,.
\end{equation}

As we will show, the fluctuation theorem follows from the following key symmetry of the rate functional $I(\mu,Q)$:
\begin{Theorem}\label{chiave}
For any $(\mu,Q) \in \L$ it holds
\begin{equation}\label{treno}
I(\mu,Q) = I(\mu, Q^T) -\frac{1}{2} \sum _{(x,y) \in E} J^Q(x,y) \ln \frac{ p_{x,y}}{ p_{y,x} }\,,
\end{equation}
where $Q^T(x,y):= Q(y,x)$ and $J^Q$ is given by \eqref{albero}.
\end{Theorem}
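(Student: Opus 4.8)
The plan is to reduce the identity \eqref{treno} to elementary algebra among the entries of $Q$, $\mu$ and $(p_{x,y})$, after first disposing of the degenerate cases; the argument is analogous to the classical computation for Markov chains (cf.\ \cite{BFG1,BFG2}), the only new ingredient being the extra $\mu$--dependent terms, which turn out to play no role.

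First I would record that $\L_0$ is invariant under the transposition $Q\mapsto Q^T$, and that the associated constants $Z_x$ are unchanged: the absolute--continuity condition on $\mu(x,\cdot)$ in Definition \ref{chirurgia} does not involve $Q$, while $\sum_y Q^T(x,y)=\sum_y Q(y,x)=Z_x$ and $\sum_y Q^T(y,x)=\sum_y Q(x,y)=Z_x$ hold precisely because $(\mu,Q)\in\L_0$; since $(Q^T)^T=Q$ this gives $(\mu,Q)\in\L_0\iff(\mu,Q^T)\in\L_0$. Hence if $(\mu,Q)\notin\L_0$ both sides of \eqref{treno} equal $+\infty$ (the sum on the right being a finite quantity) and there is nothing to prove. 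Assuming henceforth $(\mu,Q)\in\L_0$, I would next observe that $I(\mu,Q)<\infty\iff I(\mu,Q^T)<\infty$: the summands $H(\tilde\mu(x,\cdot)|\psi_x)$ and $\xi_x\mu(x,\{+\infty\})$ depend only on $\mu$ and on the common $Z_x$, while finiteness of $\sum_x Z_x H(p_{x,\cdot}^Q|p_{x,\cdot})$ amounts to the support condition ``$Q(x,y)=0$ whenever $p_{x,y}=0$''; under \eqref{simone} the set $\{(x,y):p_{x,y}>0\}$ equals the symmetric set $E$, so the same support condition holds for $Q^T$. Thus in the remaining subcase $(\mu,Q)\in\L_0$, $I(\mu,Q)=+\infty$, both sides of \eqref{treno} are again $+\infty$.

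It then remains to treat $(\mu,Q)\in\L_0$ with $I(\mu,Q)<\infty$ (hence also $I(\mu,Q^T)<\infty$). Here I would write, with the convention $0\ln0=0$ and the sums effectively restricted to $E$,
\[
Z_x\,H\bigl(p_{x,\cdot}^Q\,|\,p_{x,\cdot}\bigr)=\sum_{y}Q(x,y)\ln\frac{Q(x,y)}{Z_x\,p_{x,y}},
\]
subtract the analogous expression for $Q^T$, and use that all $\mu$--dependent summands cancel, so that
\[
I(\mu,Q)-I(\mu,Q^T)=\sum_{(x,y)\in E}\Bigl[Q(x,y)\ln\frac{Q(x,y)}{Z_x p_{x,y}}-Q(y,x)\ln\frac{Q(y,x)}{Z_x p_{x,y}}\Bigr].
\]
Relabelling $(x,y)\leftrightarrow(y,x)$ in the second term, which is legitimate because $E$ is symmetric, converts the right--hand side into $\sum_{(x,y)\in E}Q(x,y)\ln\dfrac{Z_y\,p_{y,x}}{Z_x\,p_{x,y}}$, which I would split as $A+\sum_{(x,y)\in E}Q(x,y)\ln\dfrac{p_{y,x}}{p_{x,y}}$ with $A:=\sum_{(x,y)\in E}Q(x,y)\ln(Z_y/Z_x)$. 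The crucial point is that $A=0$: summing first over the remaining variable and invoking the two identities $\sum_x Q(x,y)=Z_y$ and $\sum_y Q(x,y)=Z_x$ of Definition \ref{chirurgia} one gets $\sum_{(x,y)\in E}Q(x,y)\ln Z_y=\sum_y Z_y\ln Z_y=\sum_{(x,y)\in E}Q(x,y)\ln Z_x$, and the two contributions cancel; this is where the ``divergence--free'' structure built into $\L_0$ enters. Finally, antisymmetry of $J^Q$ together with symmetry of $E$ yields $\sum_{(x,y)\in E}Q(x,y)\ln\dfrac{p_{y,x}}{p_{x,y}}=-\tfrac12\sum_{(x,y)\in E}J^Q(x,y)\ln\dfrac{p_{x,y}}{p_{y,x}}$, which is exactly \eqref{treno}. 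I expect the only genuine care needed to be the bookkeeping of the conventions ($0\ln0$, rows with $Z_x=0$, and the $0\cdot(+\infty)$ convention in the $\xi_x$--term) and the verification --- the one place where hypothesis \eqref{simone} is really used --- that finiteness of $I(\mu,Q)$ forces both $Q$ and $Q^T$ to be supported on $E$.
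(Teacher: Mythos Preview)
Your proof is correct and follows essentially the same route as the paper: first dispose of the cases $(\mu,Q)\notin\L_0$ and $I(\mu,Q)=+\infty$ via the symmetry of $E$ under \eqref{simone}, then reduce \eqref{treno} to the difference of the relative--entropy terms $\sum_x Z_x H(p^Q_{x,\cdot}|p_{x,\cdot})$ and show that the $\ln Z_x$ contributions cancel by the divergence--free structure of $\L_0$. The paper's computation is organized slightly differently---it writes out the symmetrized halves explicitly (its equations \eqref{alex1}--\eqref{alex2}) and phrases the cancellation of the $Z$--terms via ${\rm div}\,J\equiv 0$---but your direct relabelling $(x,y)\leftrightarrow(y,x)$ and the identity $\sum_{(x,y)}Q(x,y)\ln Z_y=\sum_y Z_y\ln Z_y=\sum_{(x,y)}Q(x,y)\ln Z_x$ achieve exactly the same thing with a touch less bookkeeping.
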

Since the rate function $I$ has value in $(-\infty, +\infty]$ while $\sum _{(x,y) \in E} J^Q(x,y) \ln \frac{ p_{x,y}}{ p_{y,x} }$ is finite,   the above identity is well defined in $(-\infty,+\infty]$.
 \begin{proof} To simplify notation we write $J$ instead of $J^Q$. 
Trivially, $(\mu,Q) \in \L_0$ if and only if $(\mu, Q^T) \in \L_0$ (cf. Definition \ref{chirurgia}). 
If $(\mu,Q) \not \in \L_0$,  $(\mu, Q^T) \not \in \L_0$, then \eqref{treno} reads $+\infty= +\infty$, which is trivially true. Hence we can restrict to the case $(\mu,Q) \in \L_0$, $(\mu, Q^T) \in \L_0$. Due to \eqref{eq:LDP}, to prove \eqref{treno} it is enough to show that
\begin{equation}\label{aereo}
  \sum _{x \in V} Z_x  H \bigl( p_{x, \cdot} ^Q | p_{x, \cdot} \bigr ) =
   \sum _{x \in V} Z_x  H \bigl( p_{x, \cdot} ^{Q^T} | p_{x, \cdot} \bigr )  - \frac{1}{2} \sum _{(x,y) \in E} J(x,y) \ln \frac{ p_{x,y}}{ p_{y,x} }\,.
    \end{equation}
  We point out that $Z_x >0$ for any $x$.
 Hence, the l.h.s. of \eqref{aereo} is infinite if and only  if the following condition $\cC$ is satisfied:  there exists a pair $(x,y)$ with 
 $p_{x,y}=0$ and $Q(x,y)>0$. On the other hand, the l.h.s. of \eqref{aereo} is infinite if and only if for some $x$ the probability  $p_{x,\cdot}^Q$ is not absolutely continuous w.r.t. $p_{x,\cdot}$, i.e. if and only if there exists a pair $(x,y)$ with 
 $p_{x,y}=0$ and $Q(x,y)>0$. 
 Due to \eqref{simone}  $p_{x,y}=0$ if and only if $p_{y,x}=0$. Hence, we can restate condition $\cC$ as follows: there exists a pair $(y,x)$ such that $p_{y,x}=0$ and $Q^T(y,x)=0$.  This property is equivalent to the fact that $  \sum _{x \in V} Z_x  H \bigl( p_{x, \cdot} ^{Q^T} | p_{x, \cdot} \bigr )  =  \sum _{y \in V} Z_y  H \bigl( p_{y, \cdot} ^{Q^T} | p_{y, \cdot} \bigr )  
 $ is infinite. Hence, under condition $\cC$ \eqref{aereo},  reduces to the identity $+\infty=+\infty$
and therefore it is true.

  Let us suppose that condition $\cC$ is not fulfilled. Then, by the above observations,  the three sums in \eqref{aereo} have finite value. Moreover,  if $p(x,y)=0$ then $p(y,x)=0$, $Q(x,y)=0$ and $Q^T(x,y)=0$. Hence, using the convention that $0 \ln 0=0$, we can write
  \begin{equation}\label{alex1}
  \begin{split}
   \sum _{x \in V} Z_x  H \bigl( p_{x, \cdot} ^Q | p_{x, \cdot} \bigr )& = \sum _{(x,y)\in E}
   Q(x,y) \ln \frac{Q(x,y)}{ Z_x p_{x,y} } \\
   & =\frac{1}{2}  \sum _{(x,y)\in E} \left[ 
   Q(x,y) \ln \frac{Q(x,y)}{ Z_x p_{x,y} } +   Q(y,x) \ln \frac{Q(y,x)}{ Z_y p_{y,x} }\right]\\
   & = \frac{1}{2} \sum _{(x,y)\in E} \left[ 
   Q(x,y) \ln Q(x,y)+ Q^T(x,y) \ln Q^T(x,y)\right]  \\
   & - \frac{1}{2} \sum _{(x,y)\in E} \left[ Q(x,y) \ln (Z_x p_{x,y} )+ Q^T(x,y) \ln ( Z_y p_{y,x} ) \right]   \,.
        \end{split}
   \end{equation}
 Since $(Q^T)^T=Q$ a similar expression holds:
 \begin{equation}\label{alex2}
 \begin{split}
   \sum _{x \in V} Z_x  H \bigl( p_{x, \cdot} ^{Q^T} | p_{x, \cdot} \bigr )& =
   \frac{1}{2} \sum _{(x,y)\in E} \left[ 
   Q^T(x,y) \ln Q^T(x,y)+ Q(x,y) \ln Q(x,y)\right]  \\
   & - \frac{1}{2} \sum _{(x,y)\in E} \left[ Q^T(x,y) \ln (Z_x p_{x,y} )+ Q(x,y) \ln ( Z_y p_{y,x} ) \right] \,.
 \end{split}
 \end{equation} 
 By subtracting \eqref{alex2} from \eqref{alex1} and using that $J=Q-Q^T$, we get
 \begin{equation}\label{zero}
 \begin{split} &  \sum _{x \in V} Z_x  H \bigl( p_{x, \cdot} ^Q | p_{x, \cdot} \bigr )-  \sum _{x \in V} Z_x  H \bigl( p_{x, \cdot} ^{Q^T} | p_{x, \cdot} \bigr ) \\
 & \qquad \qquad = - \frac{1}{2} \sum _{(x,y)\in E} \left[ J(x,y) \ln (Z_x p_{x,y} )- J(x,y) \ln ( Z_y p_{y,x} ) \right]\\
 & \qquad \qquad = - \frac{1}{2} \sum _{(x,y) \in E} J(x,y) \ln \frac{ p_{x,y}}{ p_{y,x} }-\frac{1}{2}
  \sum _{(x,y)\in E} J(x,y) [ \ln (Z_x)- \ln (Z_y) ]\,.
 \end{split}
 \end{equation}
 To get \eqref{aereo} we have only to show that $\sum _{(x,y)\in E} J(x,y) [ \ln (Z_x)- \ln (Z_y) ]$. Recall that we are assuming that $(\mu,Q)\in \L_0$ and that condition $\cC$ is not fulfilled. 
As already observed, the latter implies that $J(x,y)=0$ if $(x,y) \in (V \times V) \setminus  E$, hence 
\begin{equation}
\label{uno}\sum _{(x,y)\in E} J(x,y) [ \ln (Z_x)- \ln (Z_y) ]=  
\sum _{(x,y)\in V \times V} J(x,y) [ \ln (Z_x)- \ln (Z_y) ]\,.
\end{equation}
 Since $(\mu,Q)\in \L_0$, 
  as already observed before \eqref{def_div}, ${\rm div}\, Q\equiv 0$. Hence, using also the antisymmetry of $J$, we get 
 \begin{equation}\label{gelato} 
 \begin{split}
 0& = {\rm div}\, Q(x)= \sum _{y\in V} Q(x,y)- \sum_{y\in V} Q(y,x)=\sum_{y\in V} J(x,y)\\
 &  = \frac{1}{2}
 \sum_{y\in V} ( J(x,y)-J(y,x) )= \frac{1}{2} {\rm div}\, J(x) \,,
 \end{split}
 \end{equation} thus proving that ${\rm div}\, J \equiv 0$. Since $J$ is divergenceless, 
 the scalar product of $J$ with a gradient function is zero. In our case, this reads
 \begin{equation}\label{due}
 \begin{split}
  \sum _{(x,y)\in V \times V } J(x,y) [ \ln (Z_x)- \ln (Z_y)] & = \sum _{x\in V} \ln (Z_x) [ \sum_{y\in V} J(x,y)- \sum _{y\in V}  J(y,x) ]\\
  &= \sum_{x\in V} \ln (Z_x) {\rm div}\, J(x)=0\,.  
 \end{split} \end{equation}
 As a byproduct of \eqref{zero}, \eqref{uno} and \eqref{due}, we get \eqref{aereo} and therefore \eqref{treno}.
 \end{proof}
\begin{Remark}\label{letto} Recall the rate function $\tilde I (\mu,Q)$ of Prop. \ref{tititi}.
As derived in the proof of Theorem \ref{chiave}, if $I(\mu,J)<+\infty$ then  ${\rm div} J=0$ and $J(x,y)=0$ for any pair $(x,y) \in V \times V$ such that $p_{x,y}=0$. Moreover, if $I(\mu,Q)<+\infty$, then  ${\rm div} Q=0$ and $Q(x,y)=0$ for any pair $(x,y) \in V \times V$ such that $p_{x,y}=0$
\end{Remark}

We observe that,
given $J \in \bbR^{V \times V}_{\text{antis.}} $,
the map \[
\{ Q \in \bbR_+ ^{V \times V} \,:\, J^Q=J \} \ni \cQ \;\to\; \cQ^T \in \{ Q \in \bbR_+ ^{V \times V} \,:\, J^Q=-J \} 
\]
is bijective.  The above observation, Proposition \ref{tititi} and Theorem \ref{chiave} imply immediately the following fact:
\begin{Theorem}\label{udine}$($Fluctuation theorems for $\tilde I$ and for $\cI$)\\
 The joint LD rate function $\tilde I$ for $(\mu_t,J_t)$ satisfies 
\begin{equation}\label{bu1}
\tilde I(\mu,J)= \tilde I(\mu,-J) -\frac{1}{2} \sum _{(x,y) \in E} J(x,y) \ln \frac{ p_{x,y}}{ p_{y,x} }\,, 
\end{equation}
for any $\mu \in \cP( V \times (0,+\infty])$ and $J\in  \bbR^{V \times V}_{\text{antis.}}$.

Similarly, the LD rate function $\cI$ for $J_t$ satisfies 
\begin{equation}\label{bu2}
\cI(J)= \cI(-J) -\frac{1}{2} \sum _{(x,y) \in E} J(x,y) \ln \frac{ p_{x,y}}{ p_{y,x} }\,, 
\end{equation}
for any $J\in  \bbR^{V \times V}_{\text{antis.}}$.
\end{Theorem}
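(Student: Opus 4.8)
The plan is to derive Theorem \ref{udine} purely formally from Proposition \ref{tititi} and Theorem \ref{chiave}, with essentially no new analysis; the whole content is a bookkeeping argument about infima. First I would prove \eqref{bu1}. Fix $\mu \in \cP(V\times(0,+\infty])$ and $J \in \bbR^{V\times V}_{\text{antis.}}$. By \eqref{altino} we have $\tilde I(\mu,J) = \inf\{ I(\mu,Q) : Q \in \bbR_+^{V\times V},\; J^Q = J\}$. The key elementary observation, already stated before Theorem \ref{udine}, is that $Q \mapsto Q^T$ is a bijection from $\{Q \in \bbR_+^{V\times V} : J^Q = J\}$ onto $\{Q \in \bbR_+^{V\times V} : J^Q = -J\}$, since $J^{Q^T}(x,y) = Q^T(x,y) - Q^T(y,x) = Q(y,x) - Q(x,y) = -J^Q(x,y)$. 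Therefore, taking the infimum over the first set and reindexing via $Q \leftrightarrow Q^T$,
\begin{equation*}
\tilde I(\mu,J) = \inf_{Q:\, J^Q = J} I(\mu,Q) = \inf_{Q:\, J^Q = J} I(\mu, Q^T) = \inf_{\tilde Q:\, J^{\tilde Q} = -J} I(\mu, \tilde Q).
\end{equation*}
Now I would invoke Theorem \ref{chiave} in the form \eqref{treno} for each $Q$ with $J^Q = J$: since $J^Q$ is exactly the fixed vector $J$, the correction term $-\frac12 \sum_{(x,y)\in E} J^Q(x,y) \ln(p_{x,y}/p_{y,x})$ does not depend on $Q$ — it equals $-\frac12 \sum_{(x,y)\in E} J(x,y)\ln(p_{x,y}/p_{y,x})$ for every admissible $Q$. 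Hence it can be pulled out of the infimum:
\begin{equation*}
\inf_{Q:\, J^Q = J} I(\mu, Q^T) = \inf_{Q:\, J^Q=J}\Bigl[ I(\mu,Q) + \tfrac12 \textstyle\sum_{(x,y)\in E} J(x,y)\ln\tfrac{p_{x,y}}{p_{y,x}} \Bigr] = \tilde I(\mu,J) + \tfrac12 \textstyle\sum_{(x,y)\in E} J(x,y)\ln\tfrac{p_{x,y}}{p_{y,x}},
\end{equation*}
where I used \eqref{treno} rearranged as $I(\mu,Q^T) = I(\mu,Q) + \frac12 \sum_{(x,y)\in E} J^Q(x,y)\ln(p_{x,y}/p_{y,x})$. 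Combining the two displays with the reindexing identity gives $\tilde I(\mu,J) = \tilde I(\mu,-J) + \frac12 \sum_{(x,y)\in E} J(x,y)\ln(p_{x,y}/p_{y,x})$, which is \eqref{bu1} up to rearranging the sign (noting that the correction term is finite, so no $\infty - \infty$ arises, exactly as remarked after Theorem \ref{chiave}).

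For \eqref{bu2} the argument is identical with $\cI$ in place of $\tilde I$: by \eqref{altino_bis}, $\cI(J) = \inf\{ I(\mu,Q) : (\mu,Q)\in\L,\; J^Q = J\}$, and the substitution $(\mu,Q) \mapsto (\mu, Q^T)$ is a bijection between the admissible sets for $J$ and for $-J$ (the $\mu$-component is untouched), while the correction term in \eqref{treno} again depends only on $J^Q$, hence is constant over the admissible set. Pulling it out of the double infimum over $(\mu,Q)$ yields \eqref{bu2}. I would write this out once in detail and then simply say "the same argument applied to \eqref{altino_bis} gives \eqref{bu2}".

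I do not expect a genuine obstacle here: the only points requiring a word of care are (i) checking the transposition map is a bijection on the relevant fibres (immediate from $J^{Q^T} = -J^Q$ and symmetry of the constraint, already flagged in the text), and (ii) making sure the manipulation of infima is legitimate when $I$ can take the value $+\infty$ — this is fine because the pulled-out term is a fixed finite real number (as $E$ is finite and $p_{x,y},p_{y,x}>0$ on $E$), so $\inf_Q(I(\mu,Q) + c) = \inf_Q I(\mu,Q) + c$ holds in $(-\infty,+\infty]$ without any subtlety. The slightly delicate-looking sign bookkeeping — whether the correction appears with $J$ or $-J$ — is handled automatically: since in \eqref{treno} the argument of $I$ on the left is $Q$ with current $J^Q = J$ and on the right is $Q^T$ with current $-J$, after the reindexing the term that survives is $\frac12\sum J(x,y)\ln(p_{x,y}/p_{y,x})$, matching \eqref{bu1}–\eqref{bu2}.
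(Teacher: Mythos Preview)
Your approach is exactly the paper's: the paper states that the bijection $Q\mapsto Q^T$ between the fibres $\{J^Q=J\}$ and $\{J^Q=-J\}$, together with Proposition~\ref{tititi} and Theorem~\ref{chiave}, ``imply immediately'' Theorem~\ref{udine}, and your write-up just unpacks that one line. One slip to fix: in your first display the middle equality $\inf_{J^Q=J} I(\mu,Q)=\inf_{J^Q=J} I(\mu,Q^T)$ is not true (it would force $\tilde I(\mu,J)=\tilde I(\mu,-J)$); drop it and keep only the reindexing identity $\inf_{J^Q=J} I(\mu,Q^T)=\inf_{J^{\tilde Q}=-J} I(\mu,\tilde Q)=\tilde I(\mu,-J)$, then combine with your second display to obtain $\tilde I(\mu,J)=\tilde I(\mu,-J)-\tfrac12\sum_{(x,y)\in E} J(x,y)\ln(p_{x,y}/p_{y,x})$, which is \eqref{bu1} with the correct sign.
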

We recall that the identities \eqref{bu1} and \eqref{bu2} have to be thought in $(-\infty,+\infty]$.


 \section{Fluctuation theorem for the empirical  current along chords}\label{FT_for_chords}
 Considering e.g. applications to biochemical processes (see  e.g. \cite{AG2,AG3,FD,FS2,Sc,WQ} and references therein), it is relevant to extend the above analysis to generalized empirical currents along cycles (or equivalently, chords).

  Again we assume condition \eqref{simone}. Recall \eqref{simone_bis}.
 We consider the unoriented graph $G$ with vertex set $V$ and edges
  \[ \cE = \{ \{x,y\} \,:\, x\not = y, \; (x,y) \in E \} = 
  \{ \{x,y\} \,:\, x\not = y, \; p_{x,y}>0\,,\; p_{y,x}>0 \} \,.
  \]
  Due to our irreducibility assumption on the transition  kernel $p_{x,y}$, the graph $G$ is connected.
  \smallskip
    
  An \emph{oriented cycle} $\cC$ in $G$ is given by a sequence $(z_1, \dots, z_s)$ of vertexes in $V$ such that $(z_i, z_{i+1}) \in E$, with the convention that $z_{s+1}:=z_1$.  To the oriented cycle $\cC$ we associate the affinity $\cA(\cC)$ defined as
  \begin{equation}\label{affine}
  \cA(\cC) = \sum _{i=1}^s 
  \ln \frac{ p_{z_i,z_{i+1} }}{ p_{z_{i+1},z_i }} \,. 
    \end{equation}
 
  \medskip

Fix once and for all an  unoriented  \emph{ spanning tree} $\cT$ in $G$, i.e. a subgraph of the unoriented graph 
$G$ without loops and such that any $x\in V$ is  also a vertex of $\cT$. We recall that the edges of $G$ that do not belong to $\cT$ are called \emph{chords}. For each chord choose once and for all  an orientation, and denote by $\mathfrak{c}_1, \dots, \mathfrak{c}_m$ the oriented chords of $G$. It is known that for each $k=1,\dots,m$ there is a unique self--avoiding oriented  cycle $\cC_k$   starting with the oriented edge $\mathfrak{c}_k$ and lying inside   the graph obtained from $\cT$ by adding the edge $\mathfrak{c}_k$. More precisely, there is a unique cycle 
$\cC_k=(z_1, \dots, z_s)$ such that $z_1,\dots, z_s $ are all distinct vertexes of $V$,  $(z_1,z_2)=\mathfrak{c}_k$ and $(z_i , z_{i+1}) $  is an edge of $\cT$ when disregarding the orientation for  all $i=2,\dots, s$ (with the convention that $z_{s+1}:= z_1$).

\medskip

To each $\cC_k=(z_1, \dots, z_s)$   we associate a special current $J_k \in \bbR^{V \times V}_{\text{antis.}}$ as follows:
\begin{equation}\label{tac}
J_k(x,y) := 
\begin{cases}
1 & \text{ if  $(x,y)=(z_i, z_{i+1})$ for some $i = 1,\dots, s$}\,,\\
-1& \text{ if  $(y,x)=(z_i, z_{i+1})$ for some $i = 1,\dots, s$}\,,\\
0 & \text{ otherwise}\,.
\end{cases}
\end{equation}
Trivially, ${\rm div} J_k=0$ and $J_k (x,y) =0$ if $p_{x,y}=0$ (i.e. if $(x,y) \not \in E$).

\medskip

The following fact is a direct consequence of Lemma 9.3 in \cite{BFG2}:
\begin{Proposition}\label{campana}
Let $J \in \bbR^{V \times V}_{\text{antis.}}$ be such that ${\rm div} J=0$ and $J(e)=0$ for any $ e \not \in E$. Then 
$J= \sum_{k=1}^m J( \mathfrak{c}_k) J_k $.
\end{Proposition}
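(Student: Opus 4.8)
The plan is to reduce the statement to a standard fact about the cycle space of a finite connected graph, following the reference given. First I would observe that the set of antisymmetric functions $J$ with ${\rm div} J = 0$ and $J(e)=0$ for $e\notin E$ is precisely the real vector space of flows supported on the edge set of the graph $G$, and that this is exactly the cycle space of $G$, which has dimension $m = |\cE| - |V| + 1$ (the number of chords relative to the fixed spanning tree $\cT$). The currents $J_1,\dots,J_m$ associated to the fundamental cycles $\cC_1,\dots,\cC_m$ are well known to form a basis of this cycle space; this is the content of Lemma~9.3 in \cite{BFG2}, and it is what makes the decomposition both possible and unique.

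Next I would pin down the coefficients. Given $J$ in the cycle space, write $J = \sum_{k=1}^m c_k J_k$ for some real scalars $c_k$, whose existence is guaranteed by the basis property. To identify $c_k$, I would evaluate both sides on the oriented chord $\mathfrak{c}_k$. By construction the fundamental cycle $\cC_\ell$ uses the chord $\mathfrak{c}_\ell$ and otherwise consists only of tree edges; in particular $\cC_\ell$ contains no chord other than $\mathfrak{c}_\ell$. Hence $J_\ell(\mathfrak{c}_k) = \delta_{k,\ell}$, since $J_\ell(\mathfrak{c}_\ell)=1$ by \eqref{tac} and $J_\ell$ vanishes on every other chord. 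Evaluating $J = \sum_\ell c_\ell J_\ell$ at $\mathfrak{c}_k$ therefore gives $J(\mathfrak{c}_k) = c_k$, which yields exactly $J = \sum_{k=1}^m J(\mathfrak{c}_k) J_k$.

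The only nontrivial ingredient is the claim that the fundamental cycle $\cC_k$ meets the chord set only in $\mathfrak{c}_k$; this is immediate from the definition of $\cC_k$ as the unique cycle lying inside $\cT \cup \{\mathfrak{c}_k\}$, since all its edges other than $\mathfrak{c}_k$ are edges of the tree $\cT$ and tree edges are by definition not chords. I expect the main (and essentially only) obstacle to be the citation-level fact from \cite[Lemma~9.3]{BFG2} that $\{J_k\}_{k=1}^m$ spans the space of divergence-free currents supported on $\cE$; once that is invoked, the argument is the routine linear-algebra computation of coordinates with respect to a basis dual to the chords, as sketched above. I would therefore present the proof as: cite \cite{BFG2} for the spanning property, expand $J$ in this generating family, and read off the coefficients by pairing with the chords.
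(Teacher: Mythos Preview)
Your proposal is correct and matches the paper's approach exactly: the paper simply states that the proposition is a direct consequence of Lemma~9.3 in \cite{BFG2}, and your argument spells out precisely that deduction (basis property of the fundamental cycle currents plus the coefficient identification via $J_\ell(\mathfrak{c}_k)=\delta_{k,\ell}$).
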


As consequence only of \eqref{bu2} in Theorem \ref{udine} and the decomposition given in Proposition \ref{campana} we get:

\begin{Theorem}\label{cordine}
Under $\bbP_\g$ the random vector $\bigl (J_t( \mathfrak{c}_1), J_t( \mathfrak{c}_2), \dots, J_t( \mathfrak{c}_m) \bigr)$ satisfies a LDP with speed $t$ and good rate function $\hat{\cI}$ satisfying
\begin{equation}\label{eq:pooh}
\hat{\cI} (\th_1, \dots, \th_m)= \hat{\cI} (-\th_1, \dots, -\th_m)- \sum_{k=1}^m \th_k \cA( \cC_k)\,.
\end{equation}
\end{Theorem}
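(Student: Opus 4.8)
The plan is to derive Theorem \ref{cordine} by pushing forward the fluctuation symmetry \eqref{bu2} for $\cI$ along the linear contraction that sends a current $J$ to the vector of its chord components. First I would observe that, by the contraction principle applied to the LDP for $J_t$ (Proposition \ref{tititi}), the random vector $\bigl(J_t(\mathfrak c_1),\dots,J_t(\mathfrak c_m)\bigr)$ satisfies an LDP with speed $t$ and good rate function
\[
\hat{\cI}(\th_1,\dots,\th_m)=\inf\Bigl\{\cI(J)\,:\,J\in\bbR^{V\times V}_{\text{antis.}},\ J(\mathfrak c_k)=\th_k\ \text{for all }k=1,\dots,m\Bigr\}.
\]
Here one uses that $J\mapsto(J(\mathfrak c_1),\dots,J(\mathfrak c_m))$ is continuous and that the good rate function $\cI$ was already produced in Proposition \ref{tititi}.

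Next I would exploit Remark \ref{letto} together with Proposition \ref{campana}. By Remark \ref{letto}, whenever $\cI(J)<+\infty$ the current $J$ is divergenceless and vanishes off $E$; by Proposition \ref{campana} such a $J$ is therefore uniquely determined by its chord components via $J=\sum_{k=1}^m J(\mathfrak c_k)J_k$. Consequently, for a given $(\th_1,\dots,\th_m)$ either the feasible set in the infimum above contains a unique current of finite $\cI$-value, namely $J_\th:=\sum_{k=1}^m\th_k J_k$, and then $\hat{\cI}(\th_1,\dots,\th_m)=\cI(J_\th)$; or every feasible $J$ has $\cI(J)=+\infty$ and $\hat{\cI}=+\infty$. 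In the latter case I should check that $\cI\bigl(\sum_k\th_k J_k\bigr)=+\infty$ as well, which holds because $\sum_k\th_k J_k$ is itself divergenceless and supported on $E$ (each $J_k$ has these properties), so if some feasible $J$ had finite $\cI$ it would equal $\sum_k\th_k J_k$ by Proposition \ref{campana}. Hence in all cases $\hat{\cI}(\th_1,\dots,\th_m)=\cI\bigl(\sum_{k=1}^m\th_k J_k\bigr)$.

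Now the symmetry is immediate. Since $\sum_k(-\th_k)J_k=-\sum_k\th_k J_k$, applying \eqref{bu2} with $J=\sum_k\th_k J_k$ gives
\begin{equation*}
\hat{\cI}(\th_1,\dots,\th_m)=\hat{\cI}(-\th_1,\dots,-\th_m)-\frac12\sum_{(x,y)\in E}\Bigl(\sum_{k=1}^m\th_k J_k(x,y)\Bigr)\ln\frac{p_{x,y}}{p_{y,x}}.
\end{equation*}
It remains to identify the right-hand sum with $\sum_{k=1}^m\th_k\cA(\cC_k)$. Interchanging the two summations, this reduces to the identity
\[
\frac12\sum_{(x,y)\in E}J_k(x,y)\ln\frac{p_{x,y}}{p_{y,x}}=\cA(\cC_k),
\]
which I would verify directly from the definitions \eqref{tac} and \eqref{affine}: the summand is supported on the oriented edges of the self-avoiding cycle $\cC_k=(z_1,\dots,z_s)$ and their reverses, and on each unordered pair $\{z_i,z_{i+1}\}$ the terms $(z_i,z_{i+1})$ and $(z_{i+1},z_i)$ contribute $\ln\frac{p_{z_i,z_{i+1}}}{p_{z_{i+1},z_i}}$ and $-\ln\frac{p_{z_{i+1},z_i}}{p_{z_i,z_{i+1}}}$ respectively, i.e. twice $\ln\frac{p_{z_i,z_{i+1}}}{p_{z_{i+1},z_i}}$; summing over $i=1,\dots,s$ and dividing by $2$ yields $\cA(\cC_k)$.

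The only genuinely delicate point is the reduction of the contraction infimum to a single point, i.e. making sure that finiteness of $\hat{\cI}$ forces the minimizing current to be exactly $\sum_k\th_k J_k$; this is where Remark \ref{letto} and Proposition \ref{campana} do the real work, and one must be careful to also handle the $+\infty=+\infty$ case so that the claimed equality $\hat{\cI}(\th)=\cI(\sum_k\th_k J_k)$ holds unconditionally. Everything else is bookkeeping: the contraction principle from Proposition \ref{tititi}, the linear-algebra identity between $\sum_k\th_k J_k$ and the affinities, and the substitution into \eqref{bu2}.
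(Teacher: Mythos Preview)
Your proposal is correct and follows essentially the same route as the paper's own proof: contraction from Proposition~\ref{tititi}, reduction of the infimum to the single current $J_*=\sum_k\th_k J_k$ via Remark~\ref{letto} and Proposition~\ref{campana}, application of \eqref{bu2}, and the edge-by-edge identification of $\tfrac12\sum_{(x,y)\in E}J_k(x,y)\ln\frac{p_{x,y}}{p_{y,x}}$ with $\cA(\cC_k)$. Your treatment of the $+\infty=+\infty$ case is slightly more explicit than the paper's, but the argument is otherwise identical.
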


\begin{Remark}
Having Theorem \ref{cordine} one can easily derived a fluctuation theorem for generalized algebraic currents as in \cite{FD}, i.e. currents associated to a basis $\cC_1, \dots, \cC_m$ of the cycle space where the cycles $\cC_1, \dots, \cC_m$  are  not necessarily built from a spanning tree as above. We refer to \cite{Sc,AG2,BFG2,FD} 
for an overview  on cycle theory, currents along cycles and physical implications, that still hold for  DTI semi--Markov processes due to Theorem \ref{cordine}.
\end{Remark}

\begin{proof}[Proof of Theorem \ref{cordine}]
The map $ \bbR^{V \times V}_{\text{antis.}}  \ni J \mapsto \bigl( J( \mathfrak{c}_1), \dots, J( \mathfrak{c}_m )   \bigr) \in \bbR^m$ is continuous. As a consequence of the contraction principle and the LDP stated in Theorem \ref{udine} we have that, under $\bbP_\g$, the random vector $\bigl (J_t( \mathfrak{c}_1), J_t( \mathfrak{c}_2), \dots, J_t( \mathfrak{c}_m) \bigr)$ satisfies a LDP with speed $t$ and good rate function $\hat{\cI}$ given by
\begin{equation}\label{maestra}
\hat{\cI} (\th_1, \dots, \th_m)= \inf \{ \cI (J) \,:\, J \in  W \} \,,
\end{equation}
where 
\[ W:=
\{ J  \in \bbR^{V \times V}_{\text{antis.}}  \,:\; J(\mathfrak{c}_k)= \th_k \; \forall k=1,\dots,m \}
\,.
\]
Recall the definition of $J_k$ given in \eqref{tac}. We claim that the above infimum in \eqref{maestra} is indeed a minimum attained at $J_*= \sum _{k=1}^m \th_k J_k$, i.e. $\hat{\cI} (\th_1, \dots, \th_m)=\hat{\cI} \bigl( J_*\bigr)$. 
Since $J_i(\mathfrak{c}_k) = \d_{k,i}$, it is simple to check that $J_*$ belongs to $W$. Take now a generic $J \in W$.  Due to Remark
\ref{letto}, $\cI(J) =+\infty$ if $ {\rm div}\, J \not \equiv 0$ or if $J(e)\not =0$ for some $e \not\in E$. On the other hand, by Proposition \ref{campana}, the only element $J \in W$ for which ${\rm div}\, J= 0$ and $J(e) =0$ for all $e \not \in E$ is $J_*$, thus proving our claim.

\medskip
By the previous  observation we also have $ \hat{\cI}(  -\th_1, \dots, -\th_m)=\hat{\cI} \bigl(-J_* \bigr)$. Hence, as a consequence of \eqref{bu2}, we have 
\[  \hat{\cI}(  \th_1, \dots, \th_m)=  \hat{\cI}(  -\th_1, \dots, -\th_m)-\frac{1}{2} \sum _{(x,y) \in E} J_*(x,y) \ln \frac{ p_{x,y} }{ p_{y,x} }\,.
\]
To conclude we observe that 
\[\frac{1}{2} \sum _{(x,y) \in E} J_*(x,y) \ln \frac{ p_{x,y} }{ p_{y,x} }
=\frac{1}{2} \sum_{k=1}^m \th_k \sum _{(x,y) \in E} J_k (x,y) \ln \frac{ p_{x,y} }{ p_{y,x} }
\]
and that, if $\cC_k = (z_1, \dots, z_s)$, 
\[
\frac{1}{2}  \sum _{(x,y) \in E} J_k (x,y) \ln \frac{ p_{x,y} }{ p_{y,x} }=
  \sum_{i=1}^s \frac{1}{2} \left[J_k(z_i, z_{i+1} )  \ln \frac{ p_{z_i,z_{i+1} } }{ p_{z_{i+1},z_i} }
  +J_k(z_{i+1}, z_{i} )  \ln \frac{ p_{z_{i+1},z_{i} } }{ p_{z_{i},z_{i+1}} }\right]= \cA( \cC_k)\,.
   \]

\medskip

 \end{proof}
 
  \section{Extended fluctuation theorem for the empirical  current of generic  semi--Markov processes }
  We conclude by discussing some extension  of the above analysis to generic semi--Markov processes.
  Again we assume \eqref{simone}, i.e.
  \[p_{x,y}>0 \text{ if and only if } p_{y,x}>0\,,\] 
   and we introduce the set $E$ according to \eqref{simone_bis}.

Recall the notation introduced in Subsection \ref{spada} and in particular the process 
  $\bbY:=(\bbY_t)_{t \geq 0}$ which is a DTI semi--Markov process with state space $E$. To $\bbY$  one can apply  Fact \ref{vettore}. On the other hand, $\bbY$ does  not belong to the range of application of Theorems \ref{chiave} and \ref{udine} since, given  states $x,y,z $ in $V$ with  $x \not =z$ and $(x,y), (y,z) \in E$, we have that $\hat p_{ (x,y) , (y,z) } >0$ but $\hat p_{(y,z), (x,y)}=0$.

 \smallskip

\begin{Proposition}\label{cocinella} Consider the LDP  rate functional $I (\mu, Q)$ of Fact \ref{vettore}  referred to the semi--Markov process $\bbY$  on $E$ with dynamical parameters $ \psi_{x,y}$ and $ \hat p _{\cdot, \cdot}$ In particular, $\mu \in \cP( E \times (0,+\infty])$ and $Q \in \bbR_+ ^{E \times E}$.    Let $I_*(\mu,Q)$ be the LDP  rate functional of Fact \ref{vettore}  referred to the semi--Markov process $\bbY_*$  on $E$ with dynamical parameters $ \psi^*_{x,y}:= \psi _{y,x} $ and $ \hat p _{\cdot, \cdot}$ Then 
\begin{equation}\label{hofame}
I( \mu, Q) = I_*(\mu_*, Q_*) -\frac{1}{2} \sum _{(x,y) \in E} [\mathcal{K}^Q (x,y) - \cK^Q(y,x)]\ln \frac{ p_{x,y} }{p_{y,x} } \,,
\end{equation}
where
\begin{align*}
&\mu_*( (x,y), d\t) := \mu ( (y,x) , d\t) \,, \\
& Q_* \bigl ( (x,y), (z,v) \bigr) := Q\bigl ( (v,z), (y,x) \bigr)\,,\\
&\cK ^Q (x,y) := \sum _{z \in V} Q( (x,y) , (y,z) ) \,.
\end{align*}
\end{Proposition}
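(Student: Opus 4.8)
The idea is to reduce the statement to Theorem~\ref{chiave} by observing that the right-hand side of \eqref{hofame} is nothing but that theorem's symmetry, \emph{applied to the DTI semi--Markov process $\bbY$ on $E$}, once one keeps track of the fact that the roles of the "true kernel" and the "transposed kernel" do not match the naive transposition on $E$. First I would write down explicitly the rate functional $I(\mu,Q)$ of Fact~\ref{vettore} for $\bbY$: it is supported on the analogue $\L_0$ of Definition~\ref{chirurgia} (with $V$ replaced by $E$, $\psi_x$ by $\psi_{x,y}$, $p_{x,\cdot}$ by $\hat p_{(x,y),\cdot}$), and on that set it equals $\sum_{(x,y)\in E} Z_{(x,y)}\bigl[H(\hat p^Q_{(x,y),\cdot}\,|\,\hat p_{(x,y),\cdot})+H(\tilde\mu((x,y),\cdot)\,|\,\psi_{x,y})+\xi_{(x,y)}\mu((x,y),\{+\infty\})\bigr]$. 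Because $\hat p_{(x,y),(v,z)}=\d_{y,v}p_{y,z}$, the flow $Q$ on $E\times E$ is supported on pairs of the form $((x,y),(y,z))$, and it is convenient to introduce the "vertex-flow" $\cK^Q(x,y)=\sum_z Q((x,y),(y,z))$ on $V\times V$ and to note $Z_{(x,y)}=\cK^Q(x,y)=\sum_{z}Q((z,x),(x,y))$, the last equality being the flow-balance built into $\L_0$.

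Next I would identify, term by term, how each of the three pieces of $I(\mu,Q)$ transforms under the substitution $(\mu,Q)\mapsto(\mu_*,Q_*)$ together with the passage to the dynamical parameters $\psi^*_{x,y}=\psi_{y,x}$. The reparametrization $\mu_*((x,y),\cdot)=\mu((y,x),\cdot)$ and $Q_*((x,y),(z,v))=Q((v,z),(y,x))$ is exactly the "reverse the edge and reverse the step" map; a direct check shows it is an involution on the analogue of $\L_0$ that sends the $H(\tilde\mu\,|\,\psi)$ term for $\bbY$ to the corresponding term for $\bbY_*$ (here the switch $\psi_{x,y}\to\psi_{y,x}$ is precisely what makes the absolute-continuity and entropy terms match), and likewise for the $\xi$-term, since $\xi^*_{(x,y)}=\xi_{(y,x)}$. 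Hence, just as in the proof of Theorem~\ref{chiave}, the whole identity \eqref{hofame} boils down to the kernel-entropy piece:
\begin{equation}\label{plan:key}
\sum_{(x,y)\in E} Z_{(x,y)} H\bigl(\hat p^Q_{(x,y),\cdot}\,|\,\hat p_{(x,y),\cdot}\bigr)
= \sum_{(x,y)\in E} Z^*_{(x,y)} H\bigl(\hat p^{Q_*}_{(x,y),\cdot}\,|\,\hat p_{(x,y),\cdot}\bigr)
-\tfrac12\sum_{(x,y)\in E}\bigl[\cK^Q(x,y)-\cK^Q(y,x)\bigr]\ln\frac{p_{x,y}}{p_{y,x}}.
\end{equation}
To prove \eqref{plan:key} I would first dispose of the degenerate case (the analogue of condition $\cC$): if some $Q((x,y),(y,z))>0$ with $p_{y,z}=0$ then both sides are $+\infty$, using \eqref{simone} to pass between vanishing of $p_{y,z}$ and of $p_{z,y}$ and hence to the corresponding divergence of the starred sum. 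On the complement, I would expand $H(\hat p^Q_{(x,y),\cdot}\,|\,\hat p_{(x,y),\cdot})=\sum_{z}Q((x,y),(y,z))\ln\frac{Q((x,y),(y,z))}{Z_{(x,y)}p_{y,z}}$, substitute $Q_*$, cancel the $Q\ln Q$-type terms between the two sides exactly as in \eqref{alex1}--\eqref{zero}, and be left with a sum of $\cK$-differences times $\ln(Z\,p)$-factors. The $\ln p_{x,y}/p_{y,x}$ part reproduces the affinity term in \eqref{hofame}; the leftover $\ln Z$ part is $\sum_{(x,y)}[\cK^Q(x,y)-\cK^Q(y,x)](\ln Z_{(x,y)}-\ln Z_{(y,x)})$-type, which must be shown to vanish. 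This is the analogue of \eqref{due}: here the relevant "divergence-free" statement is that $x\mapsto\sum_y[\cK^Q(x,y)-\cK^Q(y,x)]$ vanishes, which follows from the flow-balance condition $\sum_{z}Q((x,y),(y,z))=\sum_{z}Q((z,x),(x,y))$ in $\L_0$ together with summing over the middle vertex; pairing this divergence-free field against the gradient $\ln Z$ gives zero.

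\textbf{Main obstacle.} The routine entropy algebra is not the difficulty; the delicate point is getting the bookkeeping of the double transposition right — i.e.\ checking that $(\mu,Q)\mapsto(\mu_*,Q_*)$ really is an involution onto the $\L_0$-set associated to $\bbY_*$, that it intertwines $Z_{(x,y)}$ with $Z^*_{(x,y)}=Z_{(y,x)}$ and $\cK^Q$ with the reversed quantity, and that the "$\ln Z$" remainder is killed by the correct balance relation. Once these identifications are pinned down, \eqref{hofame} follows by the same subtraction trick as in Theorem~\ref{chiave}, with $\cK^Q$ playing the role that the edge-flow $Q$ played there and the cycle/divergence argument transplanted from the graph on $V$ to the graph on $E$.
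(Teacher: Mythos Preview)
Your plan follows essentially the same route as the paper: reduce to the kernel--entropy piece after checking that the involution $(\mu,Q)\mapsto(\mu_*,Q_*)$ carries $\L_0$ to $\L_0^*$ and matches the $H(\tilde\mu\,|\,\psi)$ and $\xi$--terms. The degenerate case and the cancellation of the $Q\ln Q$ contributions are handled exactly as you say, and the $\ln p$ part yields the affinity term after summing out the free index and symmetrising in $(y,z)$.

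The one place where your outline goes astray is the $\ln Z$ remainder. You anticipate a term of the form $\sum_{(x,y)}[\cK^Q(x,y)-\cK^Q(y,x)](\ln Z_{(x,y)}-\ln Z_{(y,x)})$ and plan to kill it by the ``divergence--free current paired with a gradient'' trick from Theorem~\ref{chiave}. That argument does not transplant here: $\ln Z_{(x,y)}$ is a function on $E$, not on $V$, so $\ln Z_{(x,y)}-\ln Z_{(y,x)}$ is \emph{not} a gradient of a vertex function, and the $V$--divergence--freeness of $(x,y)\mapsto \cK^Q(x,y)-\cK^Q(y,x)$ (which you correctly note) is of no help against it. In fact the expression you wrote is $\sum (a-b)(\ln a-\ln b)\geq 0$ and vanishes only if $Z_{(x,y)}=Z_{(y,x)}$ for all $(x,y)$.

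The good news is that the remainder never takes this form: the cancellation is more elementary than in Theorem~\ref{chiave}. Writing out the two kernel--entropy sums and subtracting, the $\ln Z$ contribution is
\[
-\sum_{(x,y,z)} Q\bigl((x,y),(y,z)\bigr)\ln Z_{(x,y)}
\;+\;\sum_{(x,y,z)} Q_*\bigl((x,y),(y,z)\bigr)\ln Z^*_{(x,y)}\,.
\]
In the first sum, summing over $z$ and using $\sum_z Q((x,y),(y,z))=Z_{(x,y)}$ gives $-\sum_{(x,y)} Z_{(x,y)}\ln Z_{(x,y)}$. In the second, use $Q_*((x,y),(y,z))=Q((z,y),(y,x))$ and $Z^*_{(x,y)}=Z_{(y,x)}$, then sum over $z$ with the \emph{incoming} balance $\sum_z Q((z,y),(y,x))=Z_{(y,x)}$ to get $+\sum_{(y,x)} Z_{(y,x)}\ln Z_{(y,x)}$. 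The two sums are identical and cancel outright; no divergence argument is needed. This is precisely how the paper disposes of the $\ln Z$ terms. With this correction your proof goes through and coincides with the paper's.
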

\begin{proof}
 To simplify notation we write $\mu(xy, d\t)$ instead of $\mu ((x,y), d\t)$, $Q(xy, vz)$ instead of $Q(( x,y), (v,z) )$ and similarly  for $\mu_*$, $Q_*$. In general, we will write often $xy$ instead of $(x,y)$.
 
 Note that, given  $(x,y)\in E$ and  $ (v,z) \in E$, it holds  $\hat p_{(x,y), (v,z)}>0$ if and only if  $y=v$.  Assume  now that $Q( xy,vz) >0$ for some  $(x,y) , (v,z) \in E$ with  $y \not = v$. As a first consequence we get that   $I(\mu, Q)= \infty$ by Remark \ref{letto}. On the other hand, under the same assumption, we have $Q_*(zv, yx)= Q( xy,vz) >0$ and 
  $(z,v)\in E$, $ (y, x)\in E$, $y \not = v$, thus implying that $I_*(\mu_*, Q_*)=+\infty$ by the same arguments used above.  Hence,  under the above assumption, \eqref{hofame} is trivially satisfied. 
  
  \smallskip
  
  From now on we assume that $Q(xy, vz)=0$ for any  $(x,y) , (v,z) \in E$ with  $y \not = v$, and similarly  for $Q_*$.
  
  \smallskip
  
  We first claim that $(\mu, Q) \in \L_0  $ if and only if $(\mu, Q) \in \L_0^*$ ($\L_0^*$ being the analogous of $\L_0$ for the semi--Markov process $\bbY_*$  on $E$ with dynamical parameters $ \psi^*_{x,y}:= \psi _{y,x} $ and $ \hat p _{\cdot, \cdot}$). We prove the claim. 
Trivially $\mu(xy, d\t) \ll \psi_{x,y} (d\t) $ for any $(x,y) \in E$  if and  only if  $\mu^*(xy, d\t) \ll \psi^*_{x,y} (d\t) $ for any $(x,y) \in E$. Suppose that $(\mu, Q) \in \L_0  $. Then, by definition of $\L_0$, for any $(x,y) \in E$ it holds
\begin{equation}\label{notte1}
 Z_{xy}:= \int _{(0,+\infty) }\mu (xy, d\t) \frac{1}{\t} =\sum _z Q(xy,yz)= \sum _z Q(zx , xy)\,. \end{equation}
 Since  $Z^*_{yx}:=\int _{(0,+\infty) }\mu _* (yx , d\t) \frac{1}{\t}  = \int _{(0,+\infty) }\mu  (xy, d\t) \frac{1}{\t}=: Z_{xy}\,,$
the  above identity \eqref{notte1} can be rewritten  as
\begin{equation}\label{notte2}  Z^*_{yx}=\int _{(0,+\infty) }\mu _* (yx , d\t) \frac{1}{\t}  = \sum_z Q_*(zy,yx)= \sum_z Q_*( yx,xz) \,,\end{equation}
thus  completing the proof that $(\mu_*, Q_*) \in \L_0^*$. By the same arguments one gets that $(\mu, Q) \in \L_0$ if $(\mu_*, Q_*) \in \L_0^*$, concluding the derivation of the claim.
\medskip 
 
Due to the above claim we can restrict to the case $(\mu, Q) \in \L_0$ and $(\mu_*, Q_*) \in \L_0^*$ (otherwise, \eqref{hofame} reads $+\infty= +\infty$ which is trivially true).  

\smallskip
Since  $Z^*_{xy}= Z_{yx}$, we have $ \tilde \mu _*(xy, d \t)= \tilde \mu (yx, d \t)$ (recall the notation in \eqref{arrosticino1}). This implies that $H( \tilde \mu_*(xy, d \t) | \psi^*_{xy})=  H( \tilde \mu(yx, d\t) | \psi_{yx})$, and therefore  that 
\begin{equation}\label{pupazzo1}
\sum_{(x,y) \in E} Z^*_{xy} 
H( \tilde \mu_*(xy, d \t) | \psi^*_{xy})= \sum _{(y,x) \in E} Z_{yx} H( \tilde \mu(yx, d\t) | \psi_{yx})\,.
\end{equation}
Since moreover $\xi_{xy}^*= \xi_{yx}$ (recall \eqref{arrosticino2}) we have 
\begin{equation}\label{pupazzo2}
\sum _{(x,y)\in E} Z_{xy}^* \xi_{xy}^* \mu_* ( xy, \{+\infty\})= \sum _{(x,y) \in E} Z_{yx} \xi_{yx} \mu (yx, \{+\infty\})\,.
\end{equation}

Due to \eqref{pupazzo1}, \eqref{pupazzo2} and Fact \ref{vettore} we conclude that 
\begin{equation}\label{pupazzo3} 
\begin{split}
I(\mu,Q) & = I_* (\mu_*, Q_*) + \sum _{(x,y) \in E} Z_{xy} H( \hat p_{xy, \cdot }^Q | \hat p_{xy, \cdot} ) - 
\sum _{(x,y) \in E} Z^*_{xy} H( \hat p_{xy, \cdot }^{Q^*} | \hat p_{xy, \cdot} )\\
& = I_*(\mu_*, Q_*) - \sum_{(x,y,z) \in F} Q(xy,yz) \ln ( Z_{xy} p_{y,z}) +  \sum_{(x,y,z) \in F} Q^*(xy,yz) \ln ( Z^*_{xy} p_{y,z})
\end{split}
\end{equation}
where
\[F= \{ (x,y,z): (x,y) \in E, \; (y,z) \in E \}\,.
\]
Note that $ (x,y,z)\in F $   if and only if $(z,y,x)\in F$.

Recall \eqref{notte1} and \eqref{notte2}. They imply
\begin{align}
&\sum _{(x,y,z) \in F} Q(xy,yz) \ln Z_{xy}= \sum _{(x,y) \in E} Z_{xy} \ln Z_{xy}\,,\label{po1}\\
&\sum _{(x,y,z) \in F} Q(xy,yz) \ln p_{y,z}= \sum_{(y,z) \in E} Z_{yz} \ln p_{y,z}\,,\label{po2}\\
& \sum _{(x,y,z) \in F} Q_* (xy,yz) \ln Z^*_{xy}= \sum _{(z,y,x) \in F} Q(zy,yx) \ln Z_{yx}=
 \sum _{  (y,x) \in E} Z_{yx} \ln Z_{yx}\,, \label{po3}\\
 & \sum _{(x,y,z) \in F} Q^*(xy,yz) \ln p_{y,z}= \sum _{(z,y,x) \in F} Q(zy,yx) \ln p _{y,z} = \sum _{(z,y) \in E}Z_{zy} \ln p_{y,z}\,.\label{po4}
 \end{align}

Coming back to \eqref{pupazzo3} we get
\begin{multline*} I(\mu,Q) = I_* (\mu_*, Q_*) -\eqref{po1}- \eqref{po2} + \eqref{po3}+ \eqref{po4}= 
 I_* (\mu_*, Q_*) -  \eqref{po2} + \eqref{po4}\\= I_* (\mu_*, Q_*) - \sum _{(y,z) \in E} ( Z_{yz}- Z_{zy} ) \ln p_{y,z}\,.
\end{multline*}
To conclude it is enough to observe that 
\begin{equation*}
\begin{split}
 \sum _{(y,z) \in E} ( Z_{yz}- Z_{zy} ) \ln p_{y,z}& = \frac{1}{2}[ \sum _{(y,z) \in E}  ( Z_{yz}- Z_{zy} ) \ln p_{y,z}
+ \sum _{(y,z) \in E}  ( Z_{zy}- Z_{yz} ) \ln p_{z,y }]\\
& =\frac{1}{2}\sum _{(y,z) \in E}  ( Z_{yz}- Z_{zy} ) \ln \frac{p_{y,z}}{ p_{z,y}}\,, 
\end{split}
\end{equation*}
and observe that $\cK^Q(y,z)= Z_{yz}$, $\cK^Q(z,y)= Z_{zy}$ by \eqref{notte1}.
\end{proof}

\begin{Theorem} \label{corro} The empirical current $J_t$ of 
the generic semi--Markov process $\bbX$ on $V$  with dynamical parameters $p_{x,y}$ and $\psi_{x,y}$ satisfies a LDP with
speed $t$ and  
good rate function $\cI(J)$. Writing $\cI_*(J)$ for  the rate function obtained when replacing $\psi_{x,y}$ with $\psi^*_{x,y}:= \psi_{y,x}$, we have
\begin{equation}\label{esausta}
\cI (J) = \cI _*( -J) -\frac{1}{2} \sum _{(x,y) \in E} J  (x,y) \ln \frac{ p_{x,y} }{p_{y,x} } \,.
\end{equation}
\end{Theorem}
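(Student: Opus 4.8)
The plan is to reduce Theorem \ref{corro} to Proposition \ref{cocinella} via the correspondence between the generic semi--Markov process $\bbX$ on $V$ and the DTI semi--Markov process $\bbY$ on $E$ described in Subsection \ref{spada}. Recall that $\bbX_t$ is the first coordinate of $\bbY_t$, and that $\bbY$ is a DTI process to which Fact \ref{vettore} applies. First I would express the empirical current $J_t$ of $\bbX$ in terms of the empirical flow $\widehat Q_t$ of $\bbY$: a transition of $\bbX$ from $x$ to $y$ at time $s$ corresponds to a jump of $\bbY$ from some state $(w,x)$ to $(x,y)$, so $Q_t^{\bbX}(x,y) = \sum_{w} \widehat Q_t\bigl((w,x),(x,y)\bigr) = \cK^{\widehat Q_t}(x,y)$ in the notation of Proposition \ref{cocinella} (up to boundary terms of order $1/t$ which are negligible for the LDP). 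Hence $J_t(x,y) = \cK^{\widehat Q_t}(x,y) - \cK^{\widehat Q_t}(y,x)$, which is a continuous function of $\widehat Q_t$.

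Next I would invoke Fact \ref{vettore} for $\bbY$ — giving the joint LDP for $(\mu_t^{\bbY}, \widehat Q_t)$ with good rate function $I$ — and apply the contraction principle along the continuous map $(\mu,Q) \mapsto J$ where $J(x,y) := \cK^Q(x,y) - \cK^Q(y,x)$. This yields that $J_t$ satisfies an LDP with speed $t$ and good rate function
\[
\cI(J) = \inf\bigl\{ I(\mu,Q) \,:\, \mu \in \cP(E\times(0,+\infty])\,,\ Q \in \bbR_+^{E\times E}\,,\ \cK^Q(x,y)-\cK^Q(y,x)=J(x,y)\ \forall (x,y)\bigr\}\,,
\]
and similarly $\cI_*(J)$ with $I$ replaced by $I_*$ (the rate function for $\bbY_*$, whose holding-time laws are $\psi^*_{x,y} = \psi_{y,x}$). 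The same contraction applied to $\bbY_*$ uses the identity $\cK^{Q_*}(x,y) - \cK^{Q_*}(y,x) = -\bigl(\cK^{Q}(y,x) - \cK^{Q}(x,y)\bigr)$, i.e. the map $(\mu,Q)\mapsto(\mu_*,Q_*)$ of Proposition \ref{cocinella} sends the fiber over $J$ bijectively onto the fiber over $-J$; one checks this directly from the definitions $Q_*((x,y),(z,v)) := Q((v,z),(y,x))$ and $\cK^{Q_*}(x,y) = \sum_z Q_*((x,y),(y,z)) = \sum_z Q((z,y),(y,x)) = \cK^Q(y,x)$.

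Finally I would combine these two variational formulas with the pointwise identity \eqref{hofame} of Proposition \ref{cocinella}. Take $J$ fixed. For every admissible pair $(\mu,Q)$ in the fiber over $J$, the pair $(\mu_*,Q_*)$ lies in the fiber over $-J$ for $\bbY_*$, and \eqref{hofame} gives $I(\mu,Q) = I_*(\mu_*,Q_*) - \tfrac12\sum_{(x,y)\in E}[\cK^Q(x,y)-\cK^Q(y,x)]\ln\frac{p_{x,y}}{p_{y,x}} = I_*(\mu_*,Q_*) - \tfrac12\sum_{(x,y)\in E} J(x,y)\ln\frac{p_{x,y}}{p_{y,x}}$; note the additive constant depends only on $J$, not on the particular pair. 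Taking the infimum over the fiber on both sides, and using the bijection onto the fiber over $-J$ for $\bbY_*$, yields $\cI(J) = \cI_*(-J) - \tfrac12\sum_{(x,y)\in E} J(x,y)\ln\frac{p_{x,y}}{p_{y,x}}$, which is \eqref{esausta}.

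The main obstacle I anticipate is the bookkeeping in the first step: verifying carefully that the empirical current of $\bbX$ really is the continuous functional $J = \cK^Q(\cdot,\cdot) - \cK^Q(\cdot,\cdot)^T$ of the empirical flow of $\bbY$ — in particular that the finitely many boundary/initial terms (the contribution of $X_0$ and of the possibly incomplete last holding interval) do not affect the LDP — and checking that the infimum defining $\cI$ coincides with the one obtained from applying the contraction principle directly to $J_t$ as a function on path space, rather than to $(\mu_t^{\bbY},\widehat Q_t)$. Once the identification $J_t = \cK^{\widehat Q_t} - (\cK^{\widehat Q_t})^T + o(1)$ is in place, the rest is a routine combination of the contraction principle, the bijection of fibers, and the constant-shift identity \eqref{hofame}.
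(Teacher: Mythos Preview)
Your proposal is correct and follows essentially the same route as the paper's own proof: express $J_t$ as a continuous functional of the empirical flow of $\bbY$ (up to $O(1/t)$ boundary terms), contract from Fact~\ref{vettore}, and then combine the pointwise identity \eqref{hofame} with the bijection $(\mu,Q)\mapsto(\mu_*,Q_*)$ between fibers to obtain \eqref{esausta}. Two minor bookkeeping points to tighten: a jump of $\bbX$ from $x$ to $y$ actually corresponds to a jump of $\bbY$ from $(x,y)$ to some $(y,z)$ (not from $(w,x)$ to $(x,y)$, though both incoming and outgoing sums agree up to $O(1/t)$), and your identity $\cK^{Q_*}(x,y)=\cK^Q(y,x)$ holds only for divergence-free $Q$, i.e.\ on $\L_0$, which is harmless since only such $Q$ contribute to the infimum.
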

\begin{proof}
Let us write $Q^{\bbX} _t$ for the empirical flow associated to $(\bbX_t)_{t\geq0}$ and $Q_t^{\bbY}$ for the empirical flow associated to $(\bbY_t)_{t\geq0}$. We have  $Q_t^\bbX(x,y)= \sum _z Q_t ^\bbY(xy,yz) + O(1/t)$.  Hence,   we get the LDP of  $Q_t^\bbX$ with a good rate function   from the LDP of $Q_t^\bbY$ with a good rate function (the latter  holds by contraction due to Fact \ref{vettore} and since $\bbY$ is a DTI semi--Markov process). Since $J_t(x,y)= Q^\bbX_t(x,y)-Q^\bbX_t(y,x)$, by contraction we get that the LDP of $J_t$ with a good rate function. 

We observe now that 
\[ J_t (x,y)= Q^\bbX_t(x,y)-Q^\bbX_t(y,x) = \sum _z Q_t^\bbY( (x,y), (y,z))- \sum_z Q_t^\bbY( (z,y), (y,x) )+O(1/t)\,. \]
As a consequence of \eqref{hofame} and the above identity we get \eqref{esausta}.
\end{proof}

By combining Theorem \ref{corro} with \eqref{bu2} in Theorem \ref{udine} we have 
\begin{Corollary}\label{curioso}
In the same context of Theorem \ref{corro}, under the DTI  condition  (i.e. $\psi_{x,y}= \psi_x$ for all $x,y$), we have
\[\cI(J)= \cI_*(J) \,.\]
\end{Corollary}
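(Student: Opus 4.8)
The plan is to combine the two fluctuation-type identities already at our disposal. Under the DTI condition, $\psi_{x,y}=\psi_x$ for all $x,y$, and hence $\psi^*_{x,y}=\psi_{y,x}=\psi_y$. But for a DTI process the holding-time law attached to a transition $(x,y)$ depends only on the exit state, so the process $\bbX_*$ built from the parameters $p_{x,y}$ and $\psi^*_{x,y}=\psi_y$ is \emph{not} in general the same DTI process: its holding time at state $x$, after a jump that will land at $y$, would be $\psi_y$, which does depend on $y$. So one must be slightly careful: $\cI_*$ here means, as in Theorem \ref{corro}, the rate function of the generic (a priori non-DTI) semi--Markov process with parameters $p_{x,y}$ and $\psi^*_{x,y}=\psi_{y,x}$, computed via the reduction to the DTI process $\bbY_*$ on $E$.

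First I would invoke Theorem \ref{corro}: for the generic semi--Markov process with parameters $p_{x,y},\psi_{x,y}$ one has
\begin{equation}\label{eq:combo1}
\cI(J)=\cI_*(-J)-\frac12\sum_{(x,y)\in E}J(x,y)\ln\frac{p_{x,y}}{p_{y,x}}\,.
\end{equation}
Now I would specialise to the DTI case $\psi_{x,y}=\psi_x$. In that case the original process $\bbX$ is a genuine DTI semi--Markov process, so Theorem \ref{udine}, equation \eqref{bu2}, applies to it and gives
\begin{equation}\label{eq:combo2}
\cI(J)=\cI(-J)-\frac12\sum_{(x,y)\in E}J(x,y)\ln\frac{p_{x,y}}{p_{y,x}}\,.
\end{equation}
Comparing \eqref{eq:combo1} and \eqref{eq:combo2}, the two explicit affinity terms are identical, so subtracting yields $\cI_*(-J)=\cI(-J)$ for every antisymmetric $J$. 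Replacing $J$ by $-J$ (the set of admissible $J$ is symmetric under sign change, and $\cI,\cI_*$ are defined on all of $\bbR^{V\times V}_{\text{antis.}}$) gives $\cI_*(J)=\cI(J)$, which is the claim.

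The only real point requiring attention — and the step I would flag as the place where a careless argument could go wrong — is making sure that \eqref{bu2} genuinely applies to $\bbX$ in the DTI case, i.e. that $\bbX$ with $\psi_{x,y}=\psi_x$ is exactly the DTI process of Section \ref{FT_for_currents} rather than merely a generic process that happens to have direction-time-independent holding laws; this is immediate from the very definition in Subsection 2.1, since $\psi_{x,y}=\psi_x$ means the law of $\t_i$ is determined by $X_i$ alone. With that observed, everything else is bookkeeping: both identities have the same affinity correction $-\frac12\sum_{(x,y)\in E}J(x,y)\ln(p_{x,y}/p_{y,x})$ because the transition kernel $p_{x,y}$ is unchanged between $\bbX$ and $\bbX_*$, so the corrections cancel on subtraction and one is left with the stated equality of rate functions.
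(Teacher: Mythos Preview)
Your proof is correct and follows exactly the route the paper indicates: the corollary is stated immediately after Theorem~\ref{corro} with the one-line justification ``By combining Theorem~\ref{corro} with \eqref{bu2} in Theorem~\ref{udine}'', and that is precisely what you do---apply \eqref{esausta} and \eqref{bu2}, cancel the identical affinity terms, and replace $J$ by $-J$. Your extra care in checking that the DTI hypothesis makes \eqref{bu2} applicable is appropriate and matches the intended argument.
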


\medskip

\medskip


\begin{thebibliography}{2}



\bibitem{AG2}   D. Andrieux, P. Gaspard; \emph{Fluctuation theorem for currents and Schnakenberg network theory}. J. Stat. Phys. {\bf 127}, 107--131 (2007)


\bibitem{AG3} D. Andrieux, P. Gaspard; \emph{The fluctuation theorem for currents in semi--Markov processes}. J. Stat. Mech. {\bf  11}, P11007 (2008). 


\bibitem{As}  S. Asmussen; \emph{ Applied Probability and Queues}, Second Edition, Application of Mathematics {\bf 51}, SpringerÐVerlag, New York,  2003.



\bibitem{BFG1} L. Bertini,  A. Faggionato, D. Gabrielli; \emph{
Large deviations of the empirical flow for continuous time Markov chains}.
Annales de lÕInstitut Henri Poincar\'e - Probabilit\'es et Statistiques, {\bf 51},  867--900 (2015).

 
\bibitem{BFG2} L. Bertini,  A. Faggionato, D. Gabrielli; \emph{ Flows, currents, and cycles for Markov chains: Large deviation asymptotics}. Stoch. Proc. Appl.  {\bf 125}, 2786--2819 (2015).


\bibitem{D} R. Durrett; \emph{Stochatic calculus. A practical introduction}. CRC Press, Boca Raton,  1996.



\bibitem{FD}
A. Faggionato, D. Di Pietro; \emph{Gallavotti--Cohen--Type symmetry related to cycle
decompositions for Markov chains and biochemical applications}. J. Stat. Phys.
\textbf{143}, 11--32 (2011).



\bibitem{FS2} A. Faggionato, V. Silvestri; \emph{Fluctuation theorems for discrete kinetic models of molecular motors}. J. Stat. Mech.: Theory and Experiment 043206 (2017). 

\bibitem{LS}
Lebowitz J.L., Spohn H.;
\emph{A Gallavotti-Cohen-type symmetry in the large deviation
functional for stochastic dynamics.}
J. Stat. Phys. \textbf{95}, 333--365 (1999).




\bibitem{L}  N. Limnios, G. Opri\c san; \emph{Semi-Markov processes and reliability}. 
Berlin. Birkh\"auser (2001).

\bibitem{MNW} C. Maes, K.   Neto\u{c}n\'{y}, B. Wynants; \emph{Dynamical fluctuations for semi--Markov processes}. J. Phys. A: Math. Theor. {\bf 42},  365002, 21pp (2009).

\bibitem{MZ} M. Mariani, L. Zambotti;
\emph{Large deviations for the empirical measure of heavy tailed Markov renewal processes}. Adv. in Appl. Probab.
Volume 48, Number 3 (2016), 648-671.




\bibitem{MW}  E. W. Montroll, G. H. Weiss; \emph{Random Walks on Lattices. II}. J. Math. Phys. {\bf 6}: 167--181  (1965).


\bibitem{Sc} J.\ Schnakenberg; \emph{Network theory of microscopic and macroscopic behavior of master equation systems}. Rev. Mod. Phys. {\bf 48}, 571--585 (1976)



\bibitem{WQ} H. Wang, H.  Qian; \emph{On detailed balance and reversibility of semi-Markov processes and single-molecule
enzyme kinetics}.  J. Math. Phys. {\bf 48},  013303 (2007).


\end{thebibliography}
\end{document}